\newcommand{\fref}[1]{Figure~\ref{#1}}
\newcommand{\sref}[1]{Section~\ref{#1}}
\newcommand{\tref}[1]{Table~\ref{#1}}
\DeclareMathOperator{\proj}{proj}
\newcommand{\vct}{}
\newcommand{\vctproj}[2][]{\proj_{\vct{#1}}\vct{#2}}
\newcolumntype{C}[1]{>{\centering\arraybackslash}p{#1}}
\pgfplotsset{compat=newest}
\tikzset{
    invisible/.style={opacity=0,text opacity=0},
    visible on/.style={alt={#1{}{invisible}}},
    alt/.code args={<#1>#2#3}{%
      \alt<#1>{\pgfkeysalso{#2}}{\pgfkeysalso{#3}} 
    },
}
\tikzset{onslide/.code args={<#1>#2}{%
  \only<#1>{\pgfkeysalso{#2}} 
}}
\setlist[itemize]{leftmargin=0.5cm}
\newtheorem{thm}{Theorem}
\newtheorem{lemma}{Lemma}
\newcommand\copyrighttext{%
  \footnotesize \textcopyright 2019 IEEE. Personal use of this material is permitted.  Permission from IEEE must be obtained for all other uses, in any current or future media, including reprinting/republishing this material for advertising or promotional purposes, creating new collective works, for resale or redistribution to servers or lists, or reuse of any copyrighted component of this work in other works.}
\newcommand\copyrightnotice{%
\begin{tikzpicture}[remember picture,overlay]
\node[anchor=south,yshift=5pt] at (current page.south) {\fbox{\parbox{\dimexpr\textwidth-\fboxsep-\fboxrule\relax}{\copyrighttext}}};
\end{tikzpicture}%
}
\begin{document}
%
	
	\title{Optimizing Adaptive Video Streaming in Mobile Networks via Online Learning}
%
%
%

\author{Theodoros~Karagkioules,
        Georgios~S.~Paschos,
        Nikolaos~Liakopoulos,
        Atillio~Fiandrotti,
        Dimitrios~Tsilimantos
        and~Marco~Cagnazzo
\thanks{T. Karagkioules (\emph{corresponding author}), N. Liakopoulos and D. Tsilimantos are with the Mathematical and Algorithmic Sciences Lab, Paris Research Center, Huawei Technologies France,  e-mail: firstname.lastname@huawei.com.}
\thanks{G. S. Paschos was with the Mathematical and Algorithmic Sciences Lab, Paris Research Center, Huawei Technologies France while research on this work was conducted, email: gpasxos@gmail.com.}
\thanks{T. Karagkioules, A. Fiandrotti and M. Cagnazzo are with T\'el\'ecom Paris, LTCI,  Institut Polytechnique de Paris, France, e-mail: firstname.lastname@telecom-paris.fr.}
}

%
%

\markboth{}%
{Karagkioules \MakeLowercase{\textit{et al.}}:Optimizing Adaptive Video Streaming in Mobile Networks via Online Learning}
%



\maketitle
\copyrightnotice
	\begin{abstract}
In this paper, we propose a novel algorithm for video rate adaptation in \ac{HAS}, based on online learning. The proposed algorithm, named \emph{Learn2Adapt (L2A)}, is shown to provide a \emph{robust} rate adaptation strategy which, unlike most of the state-of-the-art techniques, does not require parameter tuning, channel model assumptions or application-specific adjustments. These properties make it very suitable for mobile users, who typically experience fast variations in channel characteristics. Simulations show that \emph{L2A} improves on the overall \ac{QoE} and in particular the average streaming rate, a result obtained independently of the channel and application scenarios.
	\end{abstract}


	\begin{IEEEkeywords}
		 Adaptive video streaming, online learning
	\end{IEEEkeywords}

%
\IEEEpeerreviewmaketitle

	\section{Introduction}\label{Intro}


\IEEEPARstart{v}{ideo} streaming accounts nowadays for more than 75\% of the global Internet traffic, a percentage projected to reach a striking 82\% by 2022 \cite{cisco2017}. To facilitate this increasing demand for video consumption, \ac{HAS} has been adopted as the main technology for video streaming over the Internet, gaining significant popularity as it allows video clients to seamlessly adapt to changing network conditions and video content to be distributed over existing web service infrastructures. In 2012, the \ac{MPEG} consortium created the \ac{DASH} standard \cite{Sodagar}, that has since become the dominant \ac{HAS} method.

According to \ac{DASH}, the video content is first encoded at multiple quality representations (e.g., multiple resolutions to meet the diverse display capabilities of different types of user devices and multiple bitrates to adapt to network characteristics) and is made available on an HTTP server. Each quality representation is organized in smaller and independently decodable files called segments; each segment typically accounting for a few seconds of video. A client desiring to access a video, initially fetches a manifest file from the server, that contains the description of the segments (available quality representations, bitrate of each segment, etc.). Then the client deploys a \emph{rate adaptation algorithm}, that sequentially selects the appropriate bitrate for each segment, given network conditions. Based on the bitrate indication, the client then selects the corresponding quality representation and independently requests and downloads every segment at a finite-sized queue, known as the buffer. By controlling the bitrate for each segment, rate adaptation algorithms aim at matching the video download (or streaming) rate to the channel rate. If the video consumption (or playback) rate is larger than the download rate, the buffer will deplete, eventually leading to a re-buffering event (i.e. playback interruption). In essence, a rate adaptation algorithm is an optimization solution with the objective of maximizing the streamed video bitrate, while at the same time ensuring uninterrupted and stable (i.e. minimal bitrate switches) streaming.

Constant developments in cellular networking technology, such as  4G's \ac{LTE} or the anticipated 5G, are changing the landscape of mobile high-bandwidth multimedia applications, that are becoming fast an integral part of the mobile clients' life. In particular, the demand for \emph{mobile} video streaming has advanced at unprecedented growth rates over the last years and is expected to reach 79\% of the global mobile data traffic by 2022 \cite{ciscomobile}. Nonetheless, cellular networks are typically characterized by throughput variation, caused by radio propagation effects, such as scattering, fast fading, path loss and shadowing or handover events; that occur when a data session is transferred to another cell. Such network conditions pose significant challenges on mobile streaming solutions, where optimal bitrate adaptation over fluctuating wireless channels remains an elusive task. This paper aims to offer a novel perspective on the mobile bitrate adaptation problem, under the scope of online optimization.

As the \ac{DASH} standard does not specify a particular rate adaptation algorithm, a plethora of proposed solutions exists in both scientific literature and actual industry practices. A performance evaluation of recent rate adaptation algorithms in mobile networks \cite{own}, showed that fixed-rule schemes may require parameter tuning according to the considered network or user scenario, and thus cannot generalize well beyond a certain scope of usage. In an effort to overcome this limitation, some algorithms resort to learning techniques or control theoretic approaches to attain optimal bitrate adaptation. However, their practical implementation on mobile devices may be hindered by energy-demanding architectures \cite{8048013} or by the complexity of exploring the complete optimization space \cite{MPC}. In this work we propose a novel rate adaptation algorithm based on \ac{OCO} \cite{Zinkevich:2003:OCP:3041838.3041955}, that is independent of any parameter selection concerning the streaming environment and does not require computationally heavy operations.

\ac{OCO} has emerged as a very effective online learning framework, that is also suitable for mobile deployment, in terms of resource requirements. According to \ac{OCO}, an agent learns to make sequential decisions in order to minimize an adversarial loss function, unknown at decision time. \ac{OCO} is ``model-free", as no statistical assumption is required, while at the same time it provides tractable feasibility and performance guarantees \cite{DBLP:journals/corr/abs-1804-04529}. Having already been proposed for problems with rapidly fluctuating environments, such as cloud resource reservation management \cite{THOR} and dynamic network resource allocation \cite{8027140}, it constitutes an appealing candidate for HAS as well. However, the application of \ac{OCO} in \ac{HAS} is not a straightforward task. Given its discrete decision space (set of available quality representations) and instantaneous state-dependent constraints (finite-sized buffer queue), HAS optimization does not fall directly in the class of \ac{OCO} problems.

This work provides multiple contributions towards formulating the HAS optimization problem under the \ac{OCO} framework. First, we model the adaptive streaming client by a learning agent, whose objective is to maximize the average bitrate of a streaming session, subject to scheduling constraints of the buffer queue. In general, the choice of the objective function is made under the assumption that higher average bitrate typically corresponds to higher video quality, when comparing the same video in the same resolution. The constraints are chosen relative to the buffer queue, since re-buffering events can significantly influence \ac{QoE} \cite{HAS_QoE}. Second, we fulfill the \ac{OCO} requirement that both the set of decisions and constraint functions must be convex by a) allowing the agent to make decisions on the quality representation of each segment, according to a probability distribution for the bitrate and by b) deriving a set of convex constraints associated with the upper and lower bound of the finite buffer. We achieve the latter by making a relaxation to an unbounded buffer that adheres to time-averaging constraints. Third, we model the channel rate evolution by an adversary, that decides the cost of each decision a posteriori. We  eventually solve the HAS optimization problem by proposing \emph{Learn2Adapt} (\emph{L2A}), a novel rate adaptation algorithm based on the \ac{OCO} theory. In our trace-based simulations, our proposed method proves to be \emph{robust}, providing consistently better \ac{QoE}, when evaluated against reference state-of-the-art rate adaptation algorithms in a wide spectrum of possible network and streaming conditions. 

	\section{Related work}\label{related}
Video rate adaptation schemes can be broadly classified according to the module that implements the rate adaptation logic. According to the DASH standard, multimedia delivery requires a server-client architecture, thus the rate adaptation module can be hosted at either of the two components. Server-side rate adaptation methods, require no cooperation from the client and resort to traffic shaping methods at the server-side alone \cite{Akhshabi:2013:STS:2460782.2460786}. Such approaches may produce high overhead on the server and thus make scaling with the number of clients a real challenge. Additionally, network-assisted rate adaptation methods have also been proposed, that allow  HAS clients to take network information into consideration for optimizing the rate adaptation process \cite{Krishnamoorthi:2017:BPB:3083187.3083193,6920044}. Nonetheless, most of the proposed rate adaptation schemes, reside at the client-side, where bitrate decisions are made according to either network or application-level information, a combination of both, or even cross-layer metrics. In the following, we focus our analysis on such client-side approaches, as they are more relevant and thus comparable to the proposed method herein.
  
Primarily, heuristic approaches have been proposed for client-side rate adaptation, that can be mainly classified into three categories according to the considered input. First, throughput-based methods estimate the available channel rate to decide on the bitrate of the streamed video. For instance, Li et al. \cite{Panda} propose, \emph{PANDA}, a rate adaptation module that uses a moving average filter to estimate the available throughput and schedules the download of every segment, in a way that reduces bitrate oscillations, particularly in scenarios with multiple clients. A similar throughput-based strategy, called \emph{FESTIVE} \cite{6704839}, focuses primarily on fairness amongst all clients. Second, buffer-based methods use application-level signals, such as the instantaneous buffer level to perform the adaptation. A notable such method comes from Huang et al. \cite{Netflix}, who propose \emph{BBA}; a mapping between instantaneous buffer values to video bitrate levels. Third, hybrid methods may use a combination of inputs. In that direction, Kim et al. \cite{8411495} propose \emph{XMAS}, a hybrid method that deploys a traffic shaping scheme, based on both throughput estimates and playback buffer levels, while Xie et al. \cite{Xie:2015:PPL:2789168.2790118} propose \emph{piStream}; a physical-layer informed rate adaptation strategy. Lately, new Smartphone devices have enabled the fusion of multiple sensor readings to infer context in the mobile client's environment. To this end, Mekki et al. \cite{IOBBA} solicit incorporating a user's inferred location into the decision process.


Recently, there has been a shift in the scientific literature, in regard to the methods used in the rate adaptation design; primarily towards optimization and control theoretic approaches. Most notably, Spiteri et al. \cite{BOLA} formulate rate adaptation as a utility maximization problem and devise \emph{BOLA}, an online control algorithm, that makes use of the instantaneous buffer occupancy. Also in the direction of control-theoretic schemes, \emph{MPC}, by Yin et al. \cite{MPC}, combines buffer occupancy and throughput predictions for optimal rate adaptation. 

In regard to rate adaptation methods based on optimization and in particular on dynamic programming, Zhou et al. \cite{7393865}, propose \emph{mDASH} and formulate the rate adaptation logic as a \ac{MDP} where the buffer size, bandwidth conditions and bitrate stability are taken as Markov state variables. Similarly, Bokani et al. \cite{7305810} model the rate adaptation logic as an MDP problem as well and incorporate mobility by including vehicular environments. Some of the main drawbacks of MDP-based solutions are computational load and the need to know the statistics of the network and video content in advance. 

Model-free Reinforcement Learning (RL) approaches, such as Q-Learning (QL), have also been investigated for the design of rate adaptation methods. Claeys et al. \cite{4083008} propose a QL-based HAS client, allowing dynamical adjustment of the streaming behavior to the perceived network state. While QL approaches provably converge to the optimal policy, provided that their parameters are chosen correctly, the convergence speed becomes an issue when trying to cope with previously unseen channel or video content patterns.

Lately, Deep-Learning (DL) approaches have also been proposed for rate adaptation, presenting promising merits in both accuracy and convergence of bitrate decisions. \emph{Pensieve} \cite{Pensieve} is a rate adaptation DL framework that does not rely on pre-programmed models or assumptions about the environment, but instead gradually learns the best policy for bitrate decisions through observation and experience. Another DL approach is called \emph{D-DASH} \cite{8048013}, that combines DL and RL mechanisms and achieves a good trade-off between policy optimality and convergence speed during the decision process. Nonetheless, the deployment of DL in mobile devices is typically associated with high computational and energy demands, especially during training phases and thus external (hardware) resources may be required to assist in the rate adaptation process.

Huang et al. \cite{Huang:2019:HEV:3304109.3306219} have recently explored combinatorial optimization for rate adaptation and have proposed \emph{Hindsight}, a near-optimal, linear-time and linear-space greedy algorithm.

During our literature review, we have identified a requirement for a rate adaptation approach, that is not only scalable to the number of clients, but also light in computation and that does not rely on any modelling assumptions. We attempt to fulfill these prerequisites with our novel method proposed in \sref{Algorithm}. A more detailed survey on adaptive streaming solutions can be found in \cite{8424813}.

	\section{System Model}
\label{sec:model}

This section introduces the model for the media content and client operations used in the rest of this work. Moreover, the notation is summarized in \tref{tab:variables}.

\subsection{Media model}
Let us assume that a video sequence of duration $D$ seconds is stored on a server organized in the form of $T=\lceil{}D/V\rceil{}$ segments, each of constant playback duration $V$. Each segment is encoded at $N$ quality representations at increasing \emph{target bitrate} $r \in\{r_1,\dots, r_N\}$. 
For a given quality representation $n \in\{1, \dots, N\}$, the \emph{actual size} of the $t$-th segment ($t \in \{1, \dots, T\}$) -- denoted $S_{t,n}$ and measured in bits -- is a function of the segment content. In the following, we will assume that the server is connected to the client across a channel of rate $C_t$ and thus the $t$-th segment is downloaded across the channel in $\frac{S_{t,n}}{C_t}$ seconds.

\subsection{Client model}
The client issues a request to the server, for the $t$-th segment and then waits for that segment to be fully downloaded before requesting the $(t+1)$-th segment. We refer to the, typically variable, interval between two consecutive requests as a \emph{decision epoch}. Since the content is downloaded in $T$ segments, the total number of decision epochs is $T$ and referred to as the \emph{horizon}. At the beginning of the $t$-th epoch, the client selects the quality representation $x_t \in \mathcal{X}=\{1,\dots N\}$  for segment $t$, corresponding to the bitrate indication $r_{x_t}\in\{r_1,\dots, r_N\}$ of the deployed rate adaptation algorithm. 

Let $B_t$ represent the buffer level at the beginning of the $t$-th epoch, measured in seconds of buffered video at the client. The downloaded segments are stored in a buffer whose size may not exceed an upper bound $B_{max}$, that exists normally due to memory constraints of the mobile device. Upon completely downloading the $t$-th segment, $B_t$ increases by $V$ seconds.

However, due to the concurrent playback of the buffered segments, $B_t$ will also decrease by the amount of time required to download the $t$-th segment, which is equal to $\frac{S_{t,x_t}}{C_t}$ seconds (as long as $B_t>0$). So, the buffer level evolves between two consecutive epochs according to:
\begin{equation}\label{eq:buffer_evolution}
B_{t+1}=\left[ B_{t}-\frac{S_{t,x_t}}{C_t} \right]^++V - \Delta_t,
\end{equation}
\noindent
where $[x]^+ \triangleq\max(0,x)$. A delay $\Delta_t=\left[ B_{t}-\frac{S_{t,x_t}}{C_t}+V-B_{max}\right]^+$ is  introduced to account for the upper bound $B_{max}$ of the buffer size. In other words, if $B_{t}-\frac{S_{t,x_t}}{C_t}+V < B_{max}$, the $(t+1)$-th segment is requested immediately and $\Delta_t = 0$. Otherwise, the request for the $(t+1)$-th segment is delayed by $\Delta_t$ seconds, to allow the buffer to drop to $B_{max}$. This delay protects against \emph{buffer overflow} incidents, which occur when the buffer surpasses $B_{max}$ and creates the characteristic bursty traffic of \ac{HAS} \cite{TPclassification}.
A \emph{buffer underflow} occurs when the instantaneous buffer level drops below zero, causing a \emph{stall} in the video playback, an event that significantly degrades the \ac{QoE} \cite{HAS_QoE}.

In the next section we provide a framework which allows us to design a learning algorithm, that provably optimizes video quality subject to keeping the buffer asymptotically away from the two limits.

 \begin{table}[t]\caption{Notations}
\resizebox{\columnwidth}{!}{%
\centering
\begin{tabular}{c|c|c}
   Notation & Definition & Units \\  
\hline
 $D$ & Video content total duration & seconds \\ 
 $V$& Segment duration & seconds\\
 $T$& Streaming horizon & segments \\
 $N$ & Quality representations & scalar\\
 $x_t$      & Selected quality representation for segment $t$ & scalar\\
 $r_{x_t}$  & Bitrate corresponding to quality  $x_t$& kbps\\
 $S_{t,n}$  & File size of segment $t$ in $n$-th quality & kbits \\
 $\omega_t$     & Decision distribution & probability vector\\
 $\omega^*$     & Benchmark distribution & probability vector \\
 $Q$        & Virtual queue  & scalar\\
 $V_L$      & Cautiousness parameter & scalar\\
 $\alpha$   & Step-size & scalar\\
 $\beta$    & Target switching rate & switches per epoch\\
 $\gamma$   & Switch counter & scalar\\
 $C_t$      & Channel rate at epoch $t$ & kbps\\
 $B_t$      & Buffer level at epoch $t$ & seconds\\
 $B_{max}$  & Maximum buffer level & seconds\\
 $\Delta_t$ & Buffer delay & seconds\\
\end{tabular}}
\label{tab:variables}
\end{table}

	\section{Adaptive streaming problem formulation}\label{Algorithm}
This section  provides an  algorithmic solution based on the theory of \ac{OCO} \cite{Zinkevich:2003:OCP:3041838.3041955}. In order to cast the video streaming optimization problem as an \ac{OCO} with budget constraints problem, we first propose a relaxation on the finite buffer queue and then we modify the formulation to convexify the decision space. In the following, we present our online-learning algorithm \emph{Learn2Adapt} (\emph{L2A}), based on gradient descent and we provide theoretical guarantees for its performance. 

\subsection{\ac{OCO} formulation}
We formulate the rate adaptation problem as a \emph{constrained \ac{OCO}} problem, where the goal is to minimize the cumulative losses $\sum_{t=1}^Tf_t(x_t)$ (referring to the average bitrate of the downloaded segments) while keeping the cumulative constraint functions $\sum_{t=1}^Tg^i_t(x_t)$, $~\forall i=1,2$, negative (referring to buffer underflow and overflow); see also the relevant literature \cite{COLD, 2017arXiv170204783N}.
In the \ac{OCO} framework, functions $f_t,g^i_t$ $\forall i=1,2$ are chosen by an \emph{adversary} and are unknown at decision time. We will relate these functions to the random evolution of the channel rate $C_t$, which in nature is not adversarial. Nevertheless, the adversarial setting is more general and includes any -- potentially time-varying -- distribution of $C_t$, which in turn bestows on our algorithm superior \emph{robustness}.  Next, we explain how these functions are used in our system.

Recall the set of quality representations $\mathcal{X}$, and let $x_t\in\mathcal{X}$ be the decision for the quality representation of the segment to be downloaded in epoch $t$. Consider the following functions:
\begin{align}
\tilde{f}_t(x_t)&\triangleq-r_{x_t} \label{eq:d_f}\\
\tilde{g}^1_t(x_t)&\triangleq\frac{S_{t,x_t}}{C_t}-V, \label{eq:d_g1}\\
\tilde{g}^2_t(x_t)&\triangleq V-\frac{S_{t,x_t}}{C_t}-\frac{B_{max}}{T}, \label{eq:d_g2}
\end{align}
where ~\eqref{eq:d_f} captures the utility (higher bitrate yields smaller losses). \eqref{eq:d_g1}-\eqref{eq:d_g2}  express the buffer displacement, which will be used to model the buffer underflow and overflow constraints, respectively. A high quality representation $x_t$ combined with a low channel rate $C_t$ will prolong download time $\frac{S_{t,x_t}}{C_t}$, which will result in high  buffer consumption. Since $C_t$ is unknown at decision time of $x_t$, it is impossible to know the values of $\tilde{g}^i_t(x_t),~\forall i=1,2$. Our approach therefore, is to learn the best $x_t$ based on our estimation of $\tilde{g}^i_t(x_t),~\forall i=1,2$.

To cast the above problem as \ac{OCO} with budget constraints, we propose the following steps:
\begin{itemize}
\item First, we provide a relaxation to the hard constraints of the buffer model.
\item Second, we convexify the decision set by randomization. We associate a probability to each decision, and we learn the optimal probability distribution for deciding the quality representation to download at each epoch.
\end{itemize}


\subsection{Buffer constraints}
Here we explain how we use the cumulative constraint functions $\sum_{t=1}^T\tilde{g}^i_t(x_t),~\forall i=1,2$ to model buffer underflow and overflow, respectively. The buffer evolves according to \eqref{eq:buffer_evolution} and ensuring $0 \leq B_{t}\leq B_{max},~\forall t$,  involves in principle a very complicated control problem, which in the presence of unknown adversarial $C_t$ is exacerbated.

To avoid computationally heavy approaches and to arrive at a simple (yet robust) solution, we thus seek an alternative approach. In that direction, we treat the buffer as an infinite queue, with the simpler (compared to ~\eqref{eq:buffer_evolution})  update rule: $B_{t+1}=B_{t}+V-S_{t,x_t}/C_t$, where now no additional delay $\Delta_t$ is ever imposed on the system. By this, we allow instantaneous violation of the budget, but we utilize a penalty which aims to maintain the buffer on the $[0,B_{max}]$ range on average. In particular, using  \eqref{eq:d_g1}-\eqref{eq:d_g2}, we capture in $\tilde{g}^i_t(x_t),~\forall i=1,2$ the instantaneous buffer displacement on both directions (measured in seconds) and by requiring the cumulative constraint $\sum_{t=1}^T \tilde{g}^i_t(x_t) \leq 0,~\forall i=1,2$, we ensure that on average $B_t$ remains in the non-negative regime below $B_{max}$.  A benefit is that these constraints are in the realm of \ac{OCO} theory, and therefore allow us to design a simple learning algorithm that provably satisfies them.  Overall, our approach here is to apply a loosely coupled control to the buffer constraints, by tolerating instantaneous violations and ensuring that in the long-term only a few are experienced.

\subsection{Convexification}
To obtain a convex decision set, we use a convexification method based on randomization of the decision process \cite{Shalev-Shwartz:2012:OLO:2185819.2185820}. Consider the probability simplex:
\begin{equation*}
{\Omega}=\{\bm{\omega} \in \mathbb{R}^N: \bm{\omega} \geq 0 \wedge \|{\bm{\omega}}\|_1=1 \},
\end{equation*}
where $\omega_n=\mathbb{P}(x=n)$ denotes the probability that we decide $x=n\in\{1,\dots,N\}$ and $\Omega$ is a convex set. Thus, instead of learning directly the decision $x_t$, we learn the optimal probability $\bm{\omega}_t=(\omega_{t,n})_{n=1,\dots,N}$ of picking  $x_t$ from the integer set $\mathcal{X}$.
Given a decision $\bm{\omega}_t$, the actual quality representation will be chosen according to the expectation of the corresponding utility, i.e. $x_t \in \arg\min_{x\in \mathcal{X}} |r_x-\sum_{n=1}^N \omega_{t,n} r_n|$. The functions of interest become now random processes and we must appropriately modify them by taking expectations with respect only to $\bm{\omega}_t$ and not to the randomness of $C_t$:

\resizebox{0.88\columnwidth}{!}{
\begin{minipage}{\linewidth}
\begin{align}
f_t(\bm{\omega}_t)&\triangleq-\mathbb{E}[r_{x_t}] = -\sum_{n=1}^N \omega_{t,n}r_n \label{eq:c_f}\\ 
g^1_t(\bm{\omega}_t)&\triangleq\mathbb{E}\left[\frac{S_{t,x_t}}{C_t}-V\right]=\frac{\sum_{n=1}^N \omega_{t,n}S_{t,n}}{C_t}-V \label{eq:c_g1}\\
g^2_t(\bm{\omega}_t)&\triangleq\mathbb{E}\left[V-\frac{S_{t,x_t}}{C_t}-\frac{B_{max}}{T}\right]=V-\frac{\sum_{n=1}^N \omega_{t,n}S_{t,n}}{C_t}-\frac{B_{max}}{T}. \label{eq:c_g2}
\end{align}
\end{minipage}}

Given the loss function and constraints above, we formulate the constrained \ac{OCO} problem, that we solve in \sref{solution}:
\begin{align*}
\label{eqn:opt_}
 \min_{\bm{\omega}\in\Omega} & \sum_{t=1}^T f_t(\bm{\omega}) \quad
\text{s.t. }\quad \sum_{t=1}^{T} g^i_t(\bm{\omega}) \leq 0 \quad \forall i=1,2.
\end{align*}

The following are true for functions \eqref{eq:c_f}-\eqref{eq:c_g2} and our surrogate convex problem:
\begin{itemize}
	\item The diameter of $\Omega$, defined as the largest Euclidean distance between any two vectors, is $\sqrt{N}$. 
	\item Functions $f_t$ and $g^i_t,~\forall i=1,2$ are smooth, bounded and have bounded gradients. Specifically, $\forall t,\bm{\omega},i=1,2$:
\resizebox{0.9\columnwidth}{!}{
\begin{minipage}{\linewidth}
\begin{align*}
	& |f_t(\bm{\omega})| \leq r_N, \\
	&|g^1_t(\bm{\omega})|\leq \max\left\{\left|\frac{S_{\min}}{C_{\max}}-V\right|,\left|\frac{S_{\max}}{C_{\min}}-V\right|\right\}, \\
	& |g^2_t(\bm{\omega})|\leq  \max\left\{\left|V-\frac{S_{\min}}{C_{\max}}-\frac{B_{max}}{T}\right|,\left|V-\frac{S_{\max}}{C_{\min}}-\frac{B_{max}}{T}\right|\right\},\\
	& \|\nabla f_t(\bm{\omega})\|\leq \sqrt{\sum_{n=1}^Nr_n^2},  \quad \|\nabla g^i_t(\bm{\omega})\|\leq \sqrt{\sum_{n=1}^N\left(\frac{S_{t,n}}{C_{\min}}\right)^2}, \\
\end{align*}
\end{minipage}}
	where $C_t\in [C_{\min},C_{\max}]$, $S_{t,j}\in [S_{\min},S_{\max}]$
	and $\nabla f_t(\bm{\omega}),\nabla g^i_t(\bm{\omega}),~\forall i=1,2$ denote the gradients.
\end{itemize}

\subsection{Regret metric}
At every decision epoch $t=1,2,\dots T$  the following events occur in succession:
\begin{enumerate}[(a)]
\item the  agent computes $\bm{\omega}_t\in \Omega$ according to an algorithm,
\item the  agent chooses $x_t \in \arg\min_{x\in \mathcal{X}} |r_x-\sum_{n=1}^N \omega_{t,n} r_n|$,
\item an adversary decides $C_t$, and the loss function  $\tilde{f}_t(\bm{\omega}_t)$ and the constraint functions $\tilde{g}^i_t(\bm{\omega}_t),~\forall i=1,2$ are determined using ~\eqref{eq:d_f}-\eqref{eq:d_g2}, and then used to measure the actual loss and buffer displacement,
\item the following forms of feedback are provided to the agent: (i) the value of $C_t$, (ii) the functions $f_t,g^i_t,~\forall i=1,2$, (iii) the gradients $\nabla f_t(\bm{\omega}_t), \nabla g^i_t(\bm{\omega}_t),~\forall i=1,2$.
\end{enumerate}

The feedback above is used by the agent to eventually determine the gradient vectors $\nabla f_{t+1}(\bm{\omega}_{t+1}),\nabla g^i_{t+1}(\bm{\omega}_{t+1}),~\forall i=1,2$.
We now define the performance metric in our problem which consists of two parts: the \emph{regret} of an algorithm  and the $i$-th \emph{constraint residual}, defined as:
\[
R_T = \sum_{t=1}^T f_t(\bm{\omega}_t)- \sum_{t=1}^T f_t(\bm{\omega}^*) \quad \text{and} \quad V^i_T = \sum_{t=1}^Tg^i_t(\bm{\omega}_t),
\]
respectively. Here $\bm{\omega}^*\in \Omega$ is a benchmark distribution,  that minimizes the losses in hindsight, with knowledge of the functions $f_t,g^i_t,~\forall i=1,2$. This benchmark satisfies the cumulative constraints every $K$:
\begin{align*}
\label{eqn:opt_final}
\bm{\omega}^{*}\in \arg\min_{\bm{\omega}\in\Omega} & \sum_{t=1}^T f_t(\bm{\omega}) \\
\text{s.t. } & \sum_{t=k}^{K+k-1} g^i_t(\bm{\omega}) \leq 0,\\ &
~~\forall k=1,\dots,T-K+1, ~~\text{and} \quad \forall i=1,2.
\end{align*}

This benchmark is first explained in \cite{COLD}, where the authors prove that for any $K=o(T)$,  a smart agent can learn to have no regret, while satisfying the adversarial constraints. In our case, picking $K=T^{1-\epsilon}$, for small $\epsilon>0$, gives the best approximation of our algorithms' performance, allowing maximum freedom to the competing benchmark.
If an algorithm achieves both $o(T)$ regret and $o(T)$ constraint residual, then it follows that as $T\to\infty$ we have (i) $R_T/T\to 0$, hence our algorithm has the same losses with (or ``learns'') the benchmark action, and (ii) $V^i_T/T\to 0,~\forall i=1,2$, hence our algorithm ensures the average constraint. Since the benchmark action is the best \emph{a posteriori} action, taken with knowledge of all the revealed values of $C_t$, learning it is both remarkable and very useful.


\section{\ac{OCO} solution}
\label{solution}

In this section, we propose a ``no regret'' algorithm to solve the constrained \ac{OCO} problem defined in the previous section. We first provide the intuition behind the algorithm design and the introduction of a switching budget, that allows the control of the switching frequency for our algorithm. We then detail the proposed algorithm, and finally provide some performance bounds.

\subsection{Learn to Adapt (L2A) algorithm}

As a general note, a main challenge in such problems is that the constraints $g^i_t(\bm{\omega}_t),~\forall i=1,2$ are not known when the decision of $\bm{\omega}_t$ is taken.
The \ac{OCO} approach to this issue is to predict such functions using a first order Taylor expansion of $g^i_t(\bm{\omega}_t),~\forall i=1,2$ around $\bm{\bm{\omega}}_{t-1}$ evaluated at $\bm{\bm{\omega}}_{t}$
\cite{Zinkevich:2003:OCP:3041838.3041955}: 
\begin{equation}\label{prediction}
\hat{g}^i_t(\bm{\omega}_t)\triangleq g^i_{t-1}(\bm{\omega}_{t-1})+\langle \nabla g^i_{t-1}(\bm{\omega}_{t-1}),\bm{\omega}_t-\bm{\omega}_{t-1}\rangle,~ \forall i=1,2.
\end{equation}
We recall that in (\ref{prediction}), only $\bm{\omega}_t$ is unknown at $t$, whereas $\bm{\omega}_{t-1}$, $\nabla g^i_{t-1}(\bm{\omega}_{t-1})$ and $g^i_{t-1}(\bm{\omega}_{t-1}),~ \forall i=1,2$ are known via the obtained feedback.

Contrary to the standard (unconstrained) online gradient \cite{Zinkevich:2003:OCP:3041838.3041955}, our algorithm must combine the objective and the constraint functions. To this end, consider the regularized Lagrangian:
\[
L_t(\bm{\omega},\bm{Q}(t))=\sum_{i=1}^2Q_i(t)\hat{g}^i_t(\bm{\omega})+ V_L\hat{f}_t(\bm{\omega})+\alpha||\bm{\omega}_t-\bm{\omega}_{t-1}||^2,
\]
where $Q_i(t)$ is the Lagrange multiplier, $\hat{g}^i_t(\bm{\omega})$ is the prediction of the constraint function $g^i_t(\bm{\omega})$ from ~\eqref{prediction}, $V_L$ is a \emph{cautiousness} parameter that controls the trade-off between regret and constraint residual, $\hat{f}_t(\bm{\omega})$ applies \eqref{prediction} to $f_t$, $\alpha$ is the step-size and  $||\bm{\omega}_t-\bm{\omega}_{t-1}||^2$ is a regularization term that smooths the decisions.
Parameters $V_L$ and $\alpha$ are tuned for convergence and their choices are given below.
We mention here, that the Lagrange multiplier $Q_i(t),~\forall i=1,2$ is updated in a \emph{dual ascent} approach, by accumulating the constraint deviations:
\[
Q_i(t+1)=[Q_i(t)+\hat{g}^i_t(\bm{\omega}_t)]^+,~ \forall i=1,2.
\]

We  further compound the online optimization problem by introducing a switching budget. Let $\beta \in(0,1]$ be the maximum allowed reconfiguration frequency measured in quality switches per epoch. The goal is to limit the number of changes within the horizon to at most $\beta T$\footnote{While $\beta$  may allow a switch at a given epoch $t$, $\bm{\omega}_t=\bm{\omega}_{t-1}$ is still a valid decision.}. This is a valuable property that allows stability control for the following algorithm (Algorithm \ref{alg:l2a}), that takes a step in the direction of the sub-gradient of the regularized Lagrangian.

\begin{algorithm}[H]
	\caption{Learn2Adapt (L2A)}
	\begin{algorithmic}[1]
		\renewcommand{\algorithmicrequire}{\textbf{Initialize:}}
		\renewcommand{\algorithmicensure}{\textbf{Parameters:}}
		\REQUIRE  $\bm{Q}(1)=0$, $\bm{\omega}_0 \in S$, $t'=1$
		\ENSURE  cautiousness parameter $V_L$, step size $\alpha$, maximum allowed switch rate $\beta$, switch counter $\gamma=0$
		\\
		\FOR {all $t \in \{1,2,\dots,T\} $}
		\IF{$\frac{\gamma}{t}\leq\beta$}
		\STATE{	\resizebox{0.85\hsize}{!}{$\bm{\bm{\omega}}_t=\vctproj[\Omega]{\left[\bm{\bm{\omega}}_{t-1}-\frac{\sum_{j=t'}^{t} \{V_L \nabla f_{j-1}(\bm{\bm{\omega}}_{j-1})+\sum _{i=1}^2Q_i(j)\nabla g^i_{j-1}(\bm{\bm{\omega}}_{j-1})\}}{2\alpha}\right]}$}}
		\STATE{$t'=t+1$}
		\STATE{$\gamma++$}
		\ELSE
		\STATE{$\bm{\omega}_{t}=\bm{\omega}_{t-1}$}
		\ENDIF
		\STATE {$Q_i(t+1)=[Q_i(t)+\hat{g}^i_t(\bm{\bm{\omega}}_t)]^+,~\forall i=1,2$}
		\ENDFOR
	\end{algorithmic} 
	\label{alg:l2a}
\end{algorithm}


Here $\vctproj[\Omega]{[\cdot]}$ denotes the Euclidean projection on set $\Omega$.

\subsection{Performance guarantees}

The main contribution of this work is the formulation the rate adaptation problem in the constrained \ac{OCO} framework and the proposal of \emph{Learn2Adapt} (\emph{L2A}). In the following we invoke the theorem from \cite{COLD}, to provide theoretical performance guarantees for the \emph{Learn2Adapt} algorithm. We note here that although the following theoretical guarantees are derived for $\beta=1$, in the numerical evaluation of \sref{Numerical_evaluation} we provide evidence that \emph{L2A} performs well even for $\beta<1$.

\begin{thm}[From \cite{COLD}]\label{thm:theorem1}
For $\beta=1$, choose small $\epsilon>0$, fix $K=o(T^{1-\epsilon})$, $V_L = T^{1-\epsilon/2}$, and $\alpha = V_L\sqrt{T}$. Then, 
the \emph{Learn2Adapt (L2A)} algorithm guarantees:
\begin{align*}
R_T  = O(T^{1-\epsilon/2}),\quad\quad V^i_T &=O(T^{1-\epsilon/4}),~ \forall i=1,2.
\end{align*}
\end{thm}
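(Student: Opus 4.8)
The statement is quoted verbatim from \cite{COLD}, so the plan is \emph{not} to re-derive the regret machinery but to check that the L2A instantiation satisfies the hypotheses of the master theorem in \cite{COLD} and then substitute the prescribed parameters. The first step is to list what that theorem requires: a convex compact decision set of bounded Euclidean diameter; per-round loss and constraint functions that are convex, uniformly bounded, and have uniformly bounded (sub)gradients; the online feedback model in which $f_t,g^i_t$ and their gradients at the played point are revealed after each decision; the first-order prediction rule \eqref{prediction} for the round that has not yet been observed; and the rolling $K$-slot benchmark $\bm{\omega}^*$ with $K=o(T)$. The second step is to verify each item for our surrogate problem: $\Omega$ is the probability simplex, which is convex and compact with diameter $\sqrt{N}$; the functions $f_t,g^1_t,g^2_t$ in \eqref{eq:c_f}--\eqref{eq:c_g2} are \emph{affine} in $\bm{\omega}$, hence convex, and the bullet list preceding this subsection already records the required uniform bounds on $|f_t|,|g^i_t|$ and on $\|\nabla f_t\|,\|\nabla g^i_t\|$ (all $O(1)$ in $T$ once $C_t\in[C_{\min},C_{\max}]$, $S_{t,n}\in[S_{\min},S_{\max}]$); the feedback enumerated in items (i)--(iii) of the regret-metric subsection is exactly what \cite{COLD} consumes; and $K=o(T^{1-\epsilon})\subset o(T)$ is admissible.

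The third step is to observe that, for $\beta=1$, Algorithm~\ref{alg:l2a} reduces exactly to the algorithm analysed in \cite{COLD}. Since $\gamma/t\le 1$ for every $t$, the test $\gamma/t\le\beta$ never fails, so the projected-gradient branch is taken at every epoch; consequently $t'$ is reset to $t+1$ each round and the running sum $\sum_{j=t'}^{t}\{\dots\}$ in line~3 collapses to its single current term, leaving precisely the COLD primal update $\bm{\omega}_t=\proj_\Omega\bigl[\bm{\omega}_{t-1}-\tfrac{1}{2\alpha}\bigl(V_L\nabla f_{t-1}(\bm{\omega}_{t-1})+\sum_i Q_i(t)\nabla g^i_{t-1}(\bm{\omega}_{t-1})\bigr)\bigr]$ together with the dual-ascent update $Q_i(t+1)=[Q_i(t)+\hat g^i_t(\bm{\omega}_t)]^+$. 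Invoking the master theorem of \cite{COLD} then yields guarantees that, schematically, take the form $R_T=O(V_L+\text{l.o.t.})$ and $V^i_T=O(\sqrt{V_L\,T}+\text{l.o.t.})$, where the $K$-slot slack contributes only lower-order terms; the cautiousness parameter $V_L$ thus trades off regret (growing with $V_L$) against the constraint residual (growing like $\sqrt{V_L\,T}$), and substituting $V_L=T^{1-\epsilon/2}$, $\alpha=V_L\sqrt{T}$, $K=o(T^{1-\epsilon})$ balances the two and makes the lower-order terms negligible, so $R_T=O(T^{1-\epsilon/2})$ and $V^i_T=O(\sqrt{T^{1-\epsilon/2}\cdot T})=O(T^{1-\epsilon/4})$, which is the claim.

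The step I expect to be the main obstacle is the \emph{reconciliation of the convexification/randomization layer with the COLD analysis}. The quantities driven by the algorithm and appearing in $R_T$ and $V^i_T$ are the expected functions $f_t(\bm{\omega}_t)=\mathbb{E}[\tilde f_t(x_t)]$ and $g^i_t(\bm{\omega}_t)=\mathbb{E}[\tilde g^i_t(x_t)]$ of \eqref{eq:c_f}--\eqref{eq:c_g2}, where the expectation is over the rounding $x_t\in\arg\min_{x}|r_x-\sum_{n}\omega_{t,n}r_n|$ only and \emph{not} over the adversarial $C_t$; one must check that this is exactly the object for which the $K$-slot-feasible benchmark and the potential/drift argument of \cite{COLD} are stated, and that the prediction error $g^i_t(\bm{\omega}_t)-\hat g^i_t(\bm{\omega}_t)$ incurred by the Taylor step \eqref{prediction} is absorbed into the same ``bounded-gradient times $O(1/\sqrt{T})$-step'' telescoping used there. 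A secondary point worth one line is that the theorem is stated for $\beta=1$ precisely because, for $\beta<1$, the skipped epochs (where $\bm{\omega}_t=\bm{\omega}_{t-1}$ while gradients accumulate into the next active update) fall outside the scope of the cited analysis; making that regime rigorous would require re-running the drift argument with the lumped updates, which the paper instead supports only by the numerical evidence of \sref{Numerical_evaluation}.
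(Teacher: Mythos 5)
Your proposal is correct and follows essentially the same route as the paper: the paper gives no independent proof but simply invokes the theorem of \cite{COLD} after establishing (in the bullet list on the diameter of $\Omega$ and the bounds on $|f_t|,|g^i_t|,\|\nabla f_t\|,\|\nabla g^i_t\|$, and via the convexification and buffer-relaxation steps) that the surrogate problem and feedback model fit that framework, which is exactly the hypothesis-checking and parameter-substitution you carry out, including the observation that for $\beta=1$ Algorithm~\ref{alg:l2a} collapses to the algorithm analysed in \cite{COLD}.
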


Effectively, this means that over time our algorithm learns the best a-posteriori distribution $\bm{\omega}^*$, which neatly satisfies the average constraints and minimizes the cumulative quality losses.
We experimentally verify below that the corresponding choices $x_t$ made by sampling this distribution have extremely well performing properties for video streaming adaptation.

	\section{Experimental Evaluation}\label{Numerical_evaluation}
In this section we evaluate the performance of our proposed rate adaptation algorithm  against two reference rate adaptation schemes, by experimenting with real mobile network traces and video sequences, for two separate streaming applications (\ac{VoD} and live streaming) and under five video streaming performance metrics.
\subsection{Experimental setup}\label{simulation_parameters}
\paragraph{Network scenarios}
In our evaluation, we use \emph{real} cellular network traces and in particular a data-set that includes 4G channel measurements for various mobility scenarios \cite{Raca:2018:BTL:3204949.3208123}. For our experiments, we have selected the \emph{static}, \emph{pedestrian} and \emph{car} scenarios (operator A therein), as realistic cases for no, low and high mobility, respectively. The \{\emph{static}, \emph{pedestrian}, \emph{car}\} scenario consist of \{$12, 26, 41$\} traces with an average measurement duration of \{$17, 18, 23$\} min, respectively. Cellular networks present significant challenges to the rate adaptation process, as they are typically characterized by rapid throughput fluctuation and short service outages; that may be caused by radio propagation effects, low-coverage areas or handover events. While these characteristics are realistically depicted in the selected traces of \cite{Raca:2018:BTL:3204949.3208123}, taking a step further in our evaluation, we have designed a \emph{synthetic} scenario consisting of 20 traces, that is characterized by abrupt and steep channel rate transitions. This so-called \emph{markovian} scenario emulates two channel levels (states) $\{0.75, 23.0\}$ Mbps with a $0.05$ state transition probability and is complementary to the real traces; to present the rate adaptation algorithms with an additional, even more demanding network scenario.
\paragraph{Video parameters} 
In \cite{Zabrovskiy:2018:MDD:3204949.3208140} video sequences are encoded at multiple bitrates in conditions typical of \ac{OTT} video delivery.
We used 3 sequences: \emph{BBB}, \emph{TOS} and \emph{Sintel}, encoded in the H.264/AVC standard, at target bitrates $\{0.37, 0.75, 1.5, 3.0, 5.8, 12.0, 17.0, 20.0\}$ Mbps, corresponding to resolutions in $\{384\times216, 640\times360, 1024\times576, 1280\times720, 1920\times1080, 3840\times2160\}$, and organized in DASH segments with duration $V=2$s.
\paragraph{Streaming scenarios}
In our experiments, we consider a \ac{VoD} streaming scenario and a live streaming scenario. For the \ac{VoD} scenario, we considered a maximum buffer value of $B_{max}=120$s (60 segments). For the live streaming scenario, we reduced the maximum buffer value to $B_{max}=20$s (10 segments), according to the tighter latency requirements. All the figures below concern the case of \ac{VoD}, while the results for the live streaming scenario are presented in \tref{tab:live}.
\paragraph{Algorithms} 
We compare our method \emph{L2A}, for $\beta=0.3$ and $\beta=1$, against \emph{RB}, a throughput-based method and \emph{BB}, a buffer-based method, following the design principles and parameters selection found in \cite{Panda} and \cite{BOLA}, respectively. 

\emph{RB} \cite[Section VI]{Panda} is a throughput-based rate adaptation scheme based on a four-step adaptation model, where initially the available network bandwidth is estimated using a proactive probing mechanism, that is designed to minimize bitrate oscillations. Then, the throughput estimates are smoothed using noise-filters to avoid errors due to throughput variation and each segment download is scheduled according to inter-request times, that would drive the buffer to the maximum level.

\emph{BB} {BOLA} is a buffer-based rate adaptation algorithm that uses Lyapunov optimization in order to indicate the bitrate of each segment. Practically, the algorithm is designed to maximize a joint utility function that rewards increases in the average bitrate and penalizes stalls. The implemented variant, called \emph{BOLA-O}, mitigates bitrate oscillations by a form of bitrate capping when switching to higher bitrates.

These rate adaptation methods are widely used in research, each amongst the best performing methods of their class \cite{own}. Regarding our method \emph{L2A}, the presented results consider a cautiousness parameter of $V_L=T^{0.9}$ and step size of $\alpha=V_L\sqrt{T}$. We note here that according to \ac{DASH}, in case of a stall, $\tau$ segments must be downloaded in order for the play-out to resume. For all algorithms we considered  $\tau=2$.
\paragraph{Video streaming performance metrics} 
 \begin{table*}[!t]\caption{Video streaming performance metrics}
\centering
\begin{tabular}{c|c|c}
   \textbf{Metric name} &  Element evaluated & Metric \\  
\hline &&\\
\textbf{Average bitrate} & Average video bitrate    &  $\frac{\bar{r}}{\max_{ m \in \mathcal{M}}{\bar{r}^{m}}}$ \\

\textbf{Stability} & Bitrate switching frequency& $1-\frac{\sum_{t=2}^T\left(\bm{\mathbbm{1}}_{\{r_{x_t} \neq r_{x_{t-1}}\}}\right)}{T-1}$ \\

\textbf{Smoothness} & Adaptation amplitude & $1-\frac{\sum_{t=2}^T|r_{x_t}-r_{x_{t-1}}|}{(r_N-r_1)(T-1)}$\\

\textbf{Consistency}& Stall duration & $1-\frac{\sum_{t=1}^T \bm{\mathbbm{1}}_{\left\{B_{t-1}<\frac{S_{t,x_t}}{C_t}\right\}}\left(-B_{t-1}+\sum_{k=0}^{\tau-1}\frac{S_{t+k,x_{t+k}}}{C_{t+k}}\right)}{D}$\\  

\textbf{Continuity}& Frequency of stalls & $1-\frac{\sum_{t=1}^T \bm{\mathbbm{1}}_{\left\{B_{t-1}<\frac{S_{t,x_t}}{C_t}\right\}}}{\lceil{}\frac{T}{\tau}\rceil}$ \\  \end{tabular}
\label{tab:table_metrics}
\end{table*}
We evaluate the performance of our proposed method based on the  video streaming performance metrics presented in \tref{tab:table_metrics}. \emph{Average bitrate} models the average bitrate $\bar{r}=\frac{\sum_{t=1}^Tr_{x_t}}{T}$ of the received video segments in a session, normalized over the maximum average bitrate $\max_{m\in\mathcal{M}}\bar{r}^m$ obtained for that session by any adaptation method $m\in\mathcal{M}$, where $\mathcal{M}$ is the set of all evaluated methods. Streaming \emph{stability} models the frequency of bitrate switching, while streaming \emph{smoothness} is associated with the amplitude of the bitrate switches, i.e the absolute bitrate difference between sequential segments. Both \emph{stability}  and \emph{smoothness} are normalized over the maximum attainable value for each respective metric, while $\bm{\mathbbm{1}}_{\{\mathcal{Y}\}}$ is an indicator vector; with ones at the positions that condition $\mathcal{Y}$ is true, and zeros otherwise. Additionally, we propose two metrics associated with a) the frequency of stalls and b) their severity (duration). With streaming \emph{consistency} we measure the percentage of the user's allocated time-budget (typically equal to the video length $D$) that was spent actually consuming video content (as opposed to stalling), while streaming \emph{continuity} expresses the percentage of segments that were downloaded while play-out remained uninterrupted, assuming $B_0=0$.
\subsection{Results}\label{results}

\begin{figure*}[!t]
	\centering
%
%
\definecolor{mycolor5}{RGB}{171,217,233}
\definecolor{L2A1}{RGB}{44,123,182}

\definecolor{L2A2}{RGB}{230,97,1}

\definecolor{RB}{RGB}{253,184,99}
\definecolor{BB}{rgb}{0.24, 0.71, 0.54}
\begin{tikzpicture}

\begin{axis}[%
hide axis,
xmin=10,
xmax=50,
ymin=0,
ymax=0.4,
legend style={at={(0.5,1.05)}, anchor=north, legend columns=4, legend cell align=left, align=left, draw=white!15!black}
]

\addlegendimage{color=BB, line width=2.0pt}
\addlegendentry{BB};
\addlegendimage{color=RB, line width=2.0pt}
\addlegendentry{RB};
\addlegendimage{color=L2A1, line width=2.0pt}
\addlegendentry{L2A ($\beta=0.3$)};
\addlegendimage{color=L2A2, line width=2.0pt}
\addlegendentry{L2A ($\beta=1$)};

\end{axis}
\end{tikzpicture}%
	
	\begin{tabular}{C{.49\textwidth}C{.49\textwidth}}
		\subfigure [Static user] {
			\resizebox{0.98\linewidth}{!}{%
%
%
\definecolor{mycolor5}{RGB}{171,217,233}
\definecolor{L2A1}{RGB}{44,123,182}

\definecolor{L2A2}{RGB}{230,97,1}

\definecolor{RB}{RGB}{253,184,99}
\definecolor{BB}{rgb}{0.24, 0.71, 0.54}
\begin{tikzpicture}

\begin{axis}[%
width=4.521in,
height=3.566in,
at={(0.758in,0.481in)},
scale only axis,
xmin=-1.39060329693044,
xmax=1.39060329693044,
ymin=-1.0967822777403,
ymax=1.0967822777403,
axis line style={draw=none},
ticks=none,
title style={font=\bfseries},
legend style={at={(0.5,1.05)}, anchor=north, legend columns=4, legend cell align=left, align=left, draw=white!15!black}
]
\addplot [color=BB, line width=2.5pt]
  table[row sep=crcr]{%
0.524892558816627	0.722452628145057\\
0.657721066647685	-0.213706529181149\\
1.11022302462516e-16	-0.72054456003972\\
-0.927954768958132	-0.301510781647753\\
-0.524272579844317	0.721599300296497\\
0.524892558816626	0.722452628145057\\
};
\addlegendentry{BB}
\addplot [color=RB, line width=2.5pt]
  table[row sep=crcr]{%
0.327676945054557	0.451008622928319\\
0.910972933680685	-0.295993048887928\\
1.11022302462516e-16	-0.891386207749191\\
-0.940490114881122	-0.305583762437226\\
-0.522766650454295	0.719526566310565\\
0.327676945054557	0.451008622928319\\
};
\addlegendentry{RB}

\addplot [color=L2A1, line width=2.5pt]
  table[row sep=crcr]{%
0.534907859620369	0.736237507099409\\
0.878194910147752	-0.285342823438492\\
1.11022302462516e-16	-0.873684400501662\\
-0.922118472248343	-0.299614453893683\\
-0.548548648865743	0.755012442797698\\
0.534907859620369	0.736237507099409\\
};

\addlegendentry{L2A ($\beta=0.2$)}

\addplot [color=L2A2, line width=2.5pt]
  table[row sep=crcr]{%
0.559773898210238	0.791659939041436\\
0.761197835987872	-0.247328169642325\\
1.11022302462516e-16	-0.831554053981203\\
-0.921586798903884	-0.299441702752107\\
-0.55954670614783	0.770149970001069\\
0.559773898210238	0.791659939041436\\
};
\addlegendentry{L2A ($\beta=1$)}

\node[right, align=left]
at (axis cs:0,0) {0};
\node[right, align=left]
at (axis cs:0.079,0.233) {0.25};
\node[right, align=left]
at (axis cs:0.159,0.466) {0.5};
\node[right, align=left]
at (axis cs:0.238,0.699) {0.75};
\node[right, align=left]
at (axis cs:0.430,0.850) {1};
\addplot [color=black, dotted, forget plot]
  table[row sep=crcr]{%
0	0\\
0.58283685864599	0.802206114824554\\
};
\addplot [color=black, dotted, forget plot]
  table[row sep=crcr]{%
0	-0\\
0.94304984718543	-0.306415469879979\\
};
\addplot [color=black, dotted, forget plot]
  table[row sep=crcr]{%
0	-0\\
1.11022302462516e-16	-0.99158128988915\\
};
\addplot [color=black, dotted, forget plot]
  table[row sep=crcr]{%
-0	-0\\
-0.94304984718543	-0.306415469879979\\
};
\addplot [color=black, dotted, forget plot]
  table[row sep=crcr]{%
-0	0\\
-0.58283685864599	0.802206114824554\\
};
\addplot [color=black, dotted, forget plot]
  table[row sep=crcr]{%
0	0\\
0.58283685864599	0.802206114824554\\
};
\addplot [color=black, dotted, forget plot]
  table[row sep=crcr]{%
0	0\\
0	0\\
};
\addplot [color=black, dotted, forget plot]
  table[row sep=crcr]{%
0.145709214661498	0.200551528706138\\
0.235762461796358	-0.0766038674699946\\
2.77555756156289e-17	-0.247895322472288\\
-0.235762461796357	-0.0766038674699947\\
-0.145709214661498	0.200551528706138\\
0.145709214661497	0.200551528706138\\
};
\addplot [color=black, dotted, forget plot]
  table[row sep=crcr]{%
0.291418429322995	0.401103057412277\\
0.471524923592715	-0.153207734939989\\
5.55111512312578e-17	-0.495790644944575\\
-0.471524923592715	-0.153207734939989\\
-0.291418429322995	0.401103057412277\\
0.291418429322995	0.401103057412277\\
};
\addplot [color=black, dotted, forget plot]
  table[row sep=crcr]{%
0.437127643984493	0.601654586118415\\
0.707287385389073	-0.229811602409984\\
1.11022302462516e-16	-0.743685967416863\\
-0.707287385389072	-0.229811602409984\\
-0.437127643984493	0.601654586118415\\
0.437127643984492	0.601654586118415\\
};
\addplot [color=black, dotted, forget plot]
  table[row sep=crcr]{%
0.58283685864599	0.802206114824554\\
0.94304984718543	-0.306415469879979\\
1.11022302462516e-16	-0.99158128988915\\
-0.94304984718543	-0.306415469879979\\
-0.58283685864599	0.802206114824554\\
0.58283685864599	0.802206114824554\\
};
\node[align=center,rotate=-45]
at (axis cs:0.75,0.8) {\large \textbf{Average bitrate}};
\node[align=center,,rotate=45]
at (axis cs:1,-0.4) {\large \textbf{Stability}};
\node[align=center]
at (axis cs:0,-1) {\large \textbf{Smoothness}};
\node[align=center,rotate=-45]
at (axis cs:-1,-0.4) {\large \textbf{Consistency}};
\node[align=center,rotate=45]
at (axis cs:-0.75,0.8) {\large \textbf{Continuity}};
\legend{}
\end{axis}
\end{tikzpicture}%
				\label{fig:static_radar}      
			}
		}&\subfigure [Pedestrian user] {
			\resizebox{0.98\linewidth}{!}{%
%
%
\definecolor{mycolor5}{RGB}{171,217,233}
\definecolor{L2A1}{RGB}{44,123,182}

\definecolor{L2A2}{RGB}{230,97,1}

\definecolor{RB}{RGB}{253,184,99}
\definecolor{BB}{rgb}{0.24, 0.71, 0.54}
\begin{tikzpicture}

\begin{axis}[%
width=4.521in,
height=3.566in,
at={(0.758in,0.481in)},
scale only axis,
xmin=-1.39060329693044,
xmax=1.39060329693044,
ymin=-1.0967822777403,
ymax=1.0967822777403,
axis line style={draw=none},
ticks=none,
title style={font=\bfseries},
legend style={at={(0.5,1.05)}, anchor=north, legend columns=4, legend cell align=left, align=left, draw=white!15!black}
]
\addplot [color=BB,  line width=2.0pt]
  table[row sep=crcr]{%
0.502983292266075	0.692297109774094\\
0.700151955706077	-0.227493160764894\\
1.11022302462516e-16	-0.953518340927663\\
-0.919048046083737	-0.298616811956974\\
-0.519035432176821	0.714390984932119\\
0.502983292266075	0.692297109774094\\
};
\addlegendentry{BB}
\addplot [color=RB,, line width=2.0pt]
  table[row sep=crcr]{%
0.315640824008286	0.434442323527629\\
0.88461406389851	-0.287428532925261\\
1.11022302462516e-16	-0.972987103337034\\
-0.902894959729437	-0.293368356045511\\
-0.396214929030236	0.545343064937985\\
0.315640824008286	0.434442323527629\\
};
\addlegendentry{RB}

\addplot [color=L2A1,  line width=2.0pt]
table[row sep=crcr]{%
	0.535425573221759	0.736950078740343\\
	0.876301717517881	-0.284727687864284\\
	1.11022302462516e-16	-0.882434024455501\\
	-0.902633070878332	-0.293283263199563\\
	-0.504011594351527	0.693712446173293\\
	0.535425573221758	0.736950078740343\\
};
\addlegendentry{L2A ($\beta=0.2$)}

\addplot [color=L2A2,  line width=2.0pt]
  table[row sep=crcr]{%
0.575174613431735	0.791659939041436\\
0.759687918342957	-0.246837567659802\\
1.11022302462516e-16	-0.906638439564553\\
-0.913617420848294	-0.296852294855119\\
-0.531347228598896	0.731336718935984\\
0.572032416790867	0.78733507639438\\
};
\addlegendentry{L2A ($\beta=1$)}

\node[right, align=left]
at (axis cs:0,0) {0};
\node[right, align=left]
at (axis cs:0.079,0.233) {0.25};
\node[right, align=left]
at (axis cs:0.159,0.466) {0.5};
\node[right, align=left]
at (axis cs:0.238,0.699) {0.75};
\node[right, align=left]
at (axis cs:0.430,0.850) {1};
\addplot [color=black, dotted, forget plot]
  table[row sep=crcr]{%
0	0\\
0.58283685864599	0.802206114824554\\
};
\addplot [color=black, dotted, forget plot]
  table[row sep=crcr]{%
0	-0\\
0.94304984718543	-0.306415469879979\\
};
\addplot [color=black, dotted, forget plot]
  table[row sep=crcr]{%
0	-0\\
1.11022302462516e-16	-0.99158128988915\\
};
\addplot [color=black, dotted, forget plot]
  table[row sep=crcr]{%
-0	-0\\
-0.94304984718543	-0.306415469879979\\
};
\addplot [color=black, dotted, forget plot]
  table[row sep=crcr]{%
-0	0\\
-0.58283685864599	0.802206114824554\\
};
\addplot [color=black, dotted, forget plot]
  table[row sep=crcr]{%
0	0\\
0.58283685864599	0.802206114824554\\
};
\addplot [color=black, dotted, forget plot]
  table[row sep=crcr]{%
0	0\\
0	0\\
};
\addplot [color=black, dotted, forget plot]
  table[row sep=crcr]{%
0.145709214661498	0.200551528706138\\
0.235762461796358	-0.0766038674699946\\
2.77555756156289e-17	-0.247895322472288\\
-0.235762461796357	-0.0766038674699947\\
-0.145709214661498	0.200551528706138\\
0.145709214661497	0.200551528706138\\
};
\addplot [color=black, dotted, forget plot]
  table[row sep=crcr]{%
0.291418429322995	0.401103057412277\\
0.471524923592715	-0.153207734939989\\
5.55111512312578e-17	-0.495790644944575\\
-0.471524923592715	-0.153207734939989\\
-0.291418429322995	0.401103057412277\\
0.291418429322995	0.401103057412277\\
};
\addplot [color=black, dotted, forget plot]
  table[row sep=crcr]{%
0.437127643984493	0.601654586118415\\
0.707287385389073	-0.229811602409984\\
1.11022302462516e-16	-0.743685967416863\\
-0.707287385389072	-0.229811602409984\\
-0.437127643984493	0.601654586118415\\
0.437127643984492	0.601654586118415\\
};
\addplot [color=black, dotted, forget plot]
  table[row sep=crcr]{%
0.58283685864599	0.802206114824554\\
0.94304984718543	-0.306415469879979\\
1.11022302462516e-16	-0.99158128988915\\
-0.94304984718543	-0.306415469879979\\
-0.58283685864599	0.802206114824554\\
0.58283685864599	0.802206114824554\\
};
\node[align=center,rotate=-45]
at (axis cs:0.75,0.8) {\large \textbf{Average bitrate}};
\node[align=center,,rotate=45]
at (axis cs:1,-0.4) {\large \textbf{Stability} };
\node[align=center]
at (axis cs:0,-1) {\large \textbf{Smoothness }};
\node[align=center,rotate=-45]
at (axis cs:-1,-0.4) {\large \textbf{Consistency}};
\node[align=center,rotate=45]
at (axis cs:-0.75,0.8) {\large \textbf{Continuity}};
\legend{}
\end{axis}
\end{tikzpicture}%
				\label{fig:pedestrian_radar}      
			}
		} \\ 
		\subfigure [Car user] {
			\resizebox{0.98\linewidth}{!}{%
%
\definecolor{mycolor5}{RGB}{171,217,233}
\definecolor{L2A1}{RGB}{44,123,182}

\definecolor{L2A2}{RGB}{230,97,1}

\definecolor{RB}{RGB}{253,184,99}
\definecolor{BB}{rgb}{0.24, 0.71, 0.54}
\begin{tikzpicture}

\begin{axis}[%
width=4.521in,
height=3.566in,
at={(0.758in,0.481in)},
scale only axis,
xmin=-1.39060329693044,
xmax=1.39060329693044,
ymin=-1.0967822777403,
ymax=1.0967822777403,
axis line style={draw=none},
ticks=none,
title style={font=\bfseries},
legend style={at={(0.5,1.05)}, anchor=north, legend columns=4, legend cell align=left, align=left, draw=white!15!black}
]

\addplot [color=BB, line width=2.0pt]
  table[row sep=crcr]{%
0.482430859955639	0.664009113520303\\
0.737071979242431	-0.239489203597238\\
1.11022302462516e-16	-0.745623814535469\\
-0.934792978764507	-0.303732650700817\\
-0.535123588943457	0.736534433039422\\
0.482430859955639	0.664009113520303\\
};
\addlegendentry{BB}
\addplot [color=RB,  line width=2.0pt]
  table[row sep=crcr]{%
0.311724849918447	0.42905244758934\\
0.883473907457812	-0.287058073640891\\
1.11022302462516e-16	-0.8417478568189\\
-0.913678471939966	-0.29687213155728\\
-0.42854340572437	0.589839395776166\\
0.311724849918447	0.429052447589341\\
};
\addlegendentry{RB}

\addplot [color=L2A1, line width=2.0pt]
  table[row sep=crcr]{%
0.54537593842817	0.750645581512533\\
0.884299083171765	-0.287326189483207\\
1.11022302462516e-16	-0.761647214522759\\
-0.920829530129552	-0.299195651211984\\
-0.483202877477665	0.665071704479905\\
0.54537593842817	0.750645581512533\\
};
\addlegendentry{L2A ($\beta=0.2$)}

\addplot [color=L2A2, line width=2.0pt]
  table[row sep=crcr]{%
0.575174613431735	0.791659939041436\\
0.784913995245303	-0.255034016904061\\
1.11022302462516e-16	-0.797290186890968\\
-0.920725881081009	-0.299161973594617\\
-0.502711223986173	0.691922639912502\\
0.575174613431735	0.791659939041436\\
};
\addlegendentry{L2A ($\beta=1$)}

\node[right, align=left]
at (axis cs:0,0) {0};
\node[right, align=left]
at (axis cs:0.079,0.233) {0.25};
\node[right, align=left]
at (axis cs:0.159,0.466) {0.5};
\node[right, align=left]
at (axis cs:0.238,0.699) {0.75};
\node[right, align=left]
at (axis cs:0.430,0.850) {1};
\addplot [color=black, dotted, forget plot]
  table[row sep=crcr]{%
0	0\\
0.58283685864599	0.802206114824554\\
};
\addplot [color=black, dotted, forget plot]
  table[row sep=crcr]{%
0	-0\\
0.94304984718543	-0.306415469879979\\
};
\addplot [color=black, dotted, forget plot]
  table[row sep=crcr]{%
0	-0\\
1.11022302462516e-16	-0.99158128988915\\
};
\addplot [color=black, dotted, forget plot]
  table[row sep=crcr]{%
-0	-0\\
-0.94304984718543	-0.306415469879979\\
};
\addplot [color=black, dotted, forget plot]
  table[row sep=crcr]{%
-0	0\\
-0.58283685864599	0.802206114824554\\
};
\addplot [color=black, dotted, forget plot]
  table[row sep=crcr]{%
0	0\\
0.58283685864599	0.802206114824554\\
};
\addplot [color=black, dotted, forget plot]
  table[row sep=crcr]{%
0	0\\
0	0\\
};
\addplot [color=black, dotted, forget plot]
  table[row sep=crcr]{%
0.145709214661498	0.200551528706138\\
0.235762461796358	-0.0766038674699946\\
2.77555756156289e-17	-0.247895322472288\\
-0.235762461796357	-0.0766038674699947\\
-0.145709214661498	0.200551528706138\\
0.145709214661497	0.200551528706138\\
};
\addplot [color=black, dotted, forget plot]
  table[row sep=crcr]{%
0.291418429322995	0.401103057412277\\
0.471524923592715	-0.153207734939989\\
5.55111512312578e-17	-0.495790644944575\\
-0.471524923592715	-0.153207734939989\\
-0.291418429322995	0.401103057412277\\
0.291418429322995	0.401103057412277\\
};
\addplot [color=black, dotted, forget plot]
  table[row sep=crcr]{%
0.437127643984493	0.601654586118415\\
0.707287385389073	-0.229811602409984\\
1.11022302462516e-16	-0.743685967416863\\
-0.707287385389072	-0.229811602409984\\
-0.437127643984493	0.601654586118415\\
0.437127643984492	0.601654586118415\\
};
\addplot [color=black, dotted, forget plot]
  table[row sep=crcr]{%
0.58283685864599	0.802206114824554\\
0.94304984718543	-0.306415469879979\\
1.11022302462516e-16	-0.99158128988915\\
-0.94304984718543	-0.306415469879979\\
-0.58283685864599	0.802206114824554\\
0.58283685864599	0.802206114824554\\
};
\node[align=center,rotate=-45]
at (axis cs:0.75,0.8) {\large \textbf{Average bitrate}};
\node[align=center,,rotate=45]
at (axis cs:1,-0.4) {\large \textbf{Stability}};
\node[align=center]
at (axis cs:0,-1) {\large \textbf{Smoothness}};
\node[align=center,rotate=-45]
at (axis cs:-1,-0.4) {\large \textbf{Consistency}};
\node[align=center,rotate=45]
at (axis cs:-0.75,0.8) {\large \textbf{Continuity}};
\legend{}
\end{axis}
\end{tikzpicture}%
				\label{fig:car_radar}
			}
		} &
		\subfigure [Markovian channel] {
			\resizebox{0.98\linewidth}{!}{%
%
\definecolor{mycolor5}{RGB}{171,217,233}
\definecolor{L2A1}{RGB}{44,123,182}

\definecolor{L2A2}{RGB}{230,97,1}

\definecolor{RB}{RGB}{253,184,99}
\definecolor{BB}{rgb}{0.24, 0.71, 0.54}
\begin{tikzpicture}
\begin{axis}[%
width=4.521in,
height=3.566in,
at={(0.758in,0.481in)},
scale only axis,
xmin=-1.39060329693044,
xmax=1.39060329693044,
ymin=-1.0967822777403,
ymax=1.0967822777403,
axis line style={draw=none},
ticks=none,
title style={font=\bfseries},
legend style={at={(0.5,1.05)}, anchor=north, legend columns=4, legend cell align=left, align=left, draw=white!15!black}
]
\addplot [color=BB, line width=2.0pt]
  table[row sep=crcr]{%
0.445633273085383	0.613361580235114\\
0.725041666387616	-0.235580317998864\\
1.11022302462516e-16	-0.826179743453282\\
-0.932827933066163	-0.303094168649428\\
-0.512350617597826	0.705190127003888\\
0.445633273085383	0.613361580235114\\
};
\addlegendentry{BB}
\addplot [color=RB, line width=2.0pt]
  table[row sep=crcr]{%
0.279219469928115	0.384312630252602\\
0.904218173672938	-0.293798294318084\\
1.11022302462516e-16	-0.868419688111593\\
-0.918245066482971	-0.298355908069012\\
-0.518535638668285	0.713703078183\\
0.279219469928115	0.384312630252602\\
};
\addlegendentry{RB}

\addplot [color=L2A1, line width=2.0pt]
  table[row sep=crcr]{%
0.552794673566838	0.76085659443016\\
0.906520397293352	-0.294546332117489\\
1.11022302462516e-16	-0.791211679530208\\
-0.92949923525913	-0.302012609169115\\
-0.530515116068793	0.730191414293753\\
0.552794673566838	0.76085659443016\\
};
\addlegendentry{L2A ($\beta=0.2$)}

\addplot [color=L2A2, line width=2.0pt]
  table[row sep=crcr]{%
0.587785252292473	0.809016994374947\\
0.847218292312344	-0.275277910181088\\
1.11022302462516e-16	-0.812378671242632\\
-0.928389669323452	-0.301652089342345\\
-0.542187782082172	0.746257460758265\\
0.587785252292473	0.809016994374948\\
};
\addlegendentry{L2A ($\beta=1$)}

\node[right, align=left]
at (axis cs:0,0) {0};
\node[right, align=left]
at (axis cs:0.079,0.233) {0.25};
\node[right, align=left]
at (axis cs:0.159,0.466) {0.5};
\node[right, align=left]
at (axis cs:0.238,0.699) {0.75};
\node[right, align=left]
at (axis cs:0.430,0.850) {1};
\addplot [color=black, dotted, forget plot]
  table[row sep=crcr]{%
0	0\\
0.58283685864599	0.802206114824554\\
};
\addplot [color=black, dotted, forget plot]
  table[row sep=crcr]{%
0	-0\\
0.94304984718543	-0.306415469879979\\
};
\addplot [color=black, dotted, forget plot]
  table[row sep=crcr]{%
0	-0\\
1.11022302462516e-16	-0.99158128988915\\
};
\addplot [color=black, dotted, forget plot]
  table[row sep=crcr]{%
-0	-0\\
-0.94304984718543	-0.306415469879979\\
};
\addplot [color=black, dotted, forget plot]
  table[row sep=crcr]{%
-0	0\\
-0.58283685864599	0.802206114824554\\
};
\addplot [color=black, dotted, forget plot]
  table[row sep=crcr]{%
0	0\\
0.58283685864599	0.802206114824554\\
};
\addplot [color=black, dotted, forget plot]
  table[row sep=crcr]{%
0	0\\
0	0\\
};
\addplot [color=black, dotted, forget plot]
  table[row sep=crcr]{%
0.145709214661498	0.200551528706138\\
0.235762461796358	-0.0766038674699946\\
2.77555756156289e-17	-0.247895322472288\\
-0.235762461796357	-0.0766038674699947\\
-0.145709214661498	0.200551528706138\\
0.145709214661497	0.200551528706138\\
};
\addplot [color=black, dotted, forget plot]
  table[row sep=crcr]{%
0.291418429322995	0.401103057412277\\
0.471524923592715	-0.153207734939989\\
5.55111512312578e-17	-0.495790644944575\\
-0.471524923592715	-0.153207734939989\\
-0.291418429322995	0.401103057412277\\
0.291418429322995	0.401103057412277\\
};
\addplot [color=black, dotted, forget plot]
  table[row sep=crcr]{%
0.437127643984493	0.601654586118415\\
0.707287385389073	-0.229811602409984\\
1.11022302462516e-16	-0.743685967416863\\
-0.707287385389072	-0.229811602409984\\
-0.437127643984493	0.601654586118415\\
0.437127643984492	0.601654586118415\\
};
\addplot [color=black, dotted, forget plot]
  table[row sep=crcr]{%
0.58283685864599	0.802206114824554\\
0.94304984718543	-0.306415469879979\\
1.11022302462516e-16	-0.99158128988915\\
-0.94304984718543	-0.306415469879979\\
-0.58283685864599	0.802206114824554\\
0.58283685864599	0.802206114824554\\
};
\node[align=center,rotate=-45]
at (axis cs:0.75,0.8) {\large \textbf{Average bitrate}};
\node[align=center,,rotate=45]
at (axis cs:1,-0.4) {\large \textbf{Stability}};
\node[align=center]
at (axis cs:0,-1) {\large \textbf{Smoothness}};
\node[align=center,rotate=-45]
at (axis cs:-1,-0.4) {\large \textbf{Consistency}};
\node[align=center,rotate=45]
at (axis cs:-0.75,0.8) {\large \textbf{Continuity}};
\legend{}
\end{axis}
\end{tikzpicture}%
				\label{fig:markovian_radar}
			}
		}  
	\end{tabular}
	\caption{Performance evaluation results - \emph{L2A} improves average bitrate}
	\label{fig:radar}
\end{figure*}
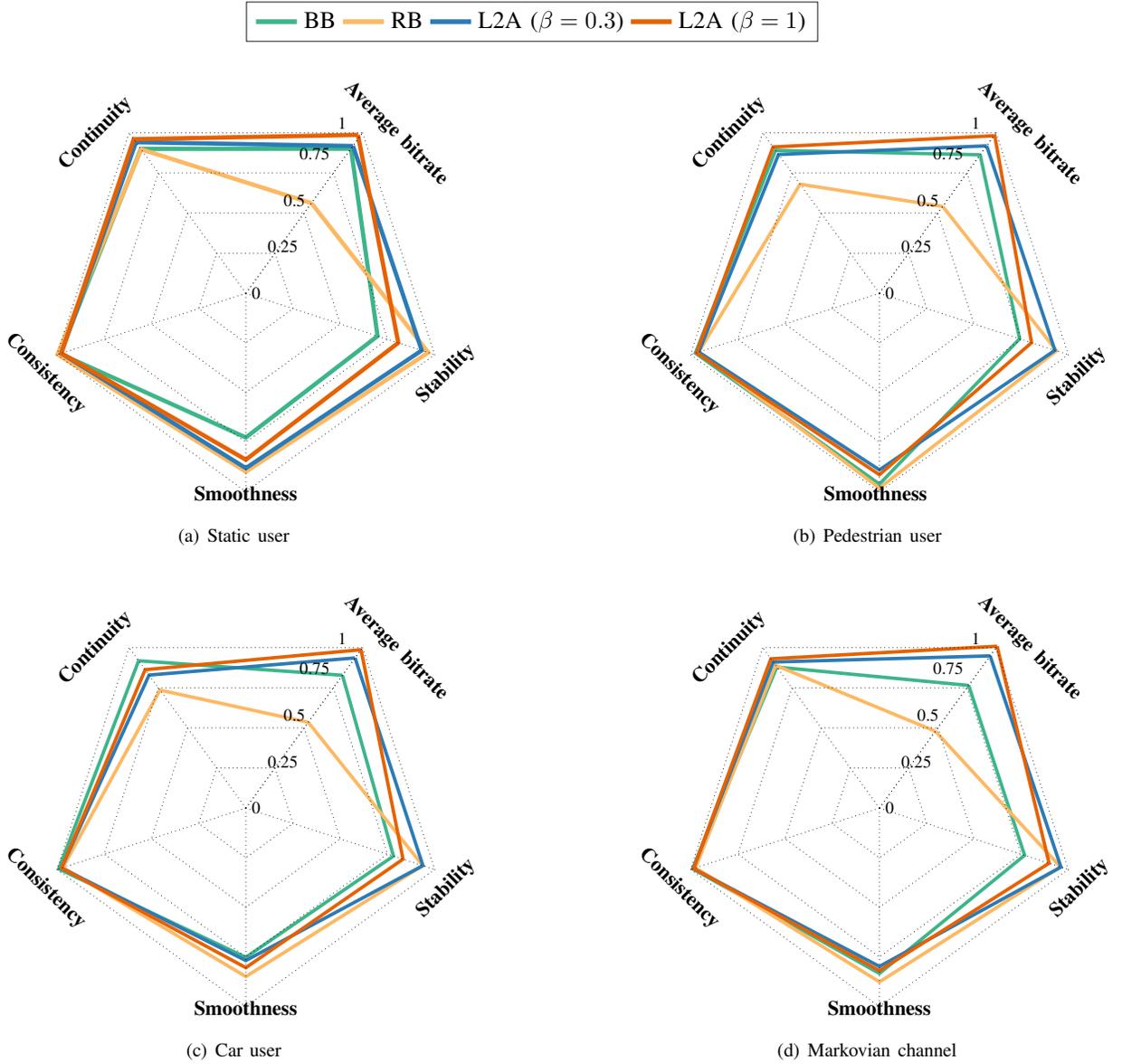

\begin{figure}[!t]
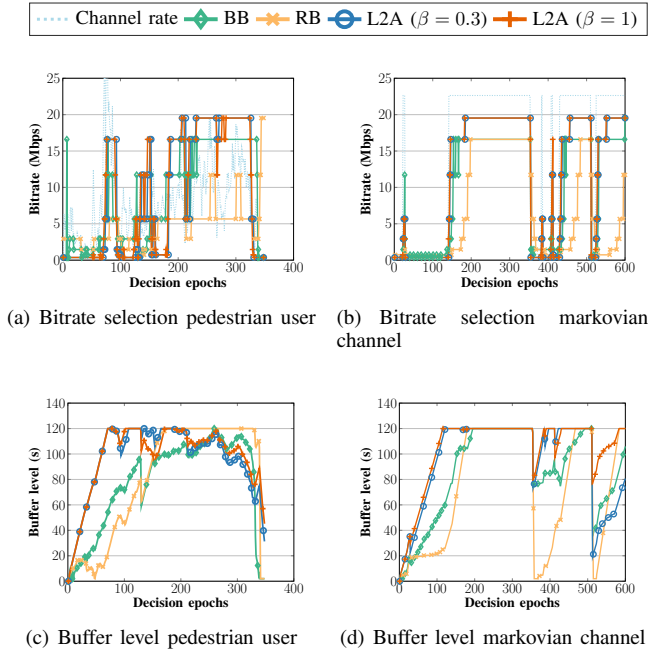
 
	\centering
	\resizebox{0.98\linewidth}{!}{%
%
%
\definecolor{mycolor5}{RGB}{171,217,233}
\definecolor{L2A1}{RGB}{44,123,182}

\definecolor{L2A2}{RGB}{230,97,1}

\definecolor{RB}{RGB}{253,184,99}
\definecolor{BB}{rgb}{0.24, 0.71, 0.54}
\begin{tikzpicture}

\begin{axis}[%
hide axis,
xmin=10,
xmax=50,
ymin=0,
ymax=0.4,
legend style={at={(0.5,1.05)}, anchor=north, legend columns=5, legend cell align=left, align=left, draw=white!15!black}
]

\addlegendimage{color=mycolor5, dotted, line width=1.5pt}
\addlegendentry{Channel rate};
\addlegendimage{color=BB, line width=2.0pt,mark size=4.0pt, mark=diamond, mark options={solid, BB}}
\addlegendentry{BB};
\addlegendimage{color=RB,  line width=2.0pt,mark size=4.0pt, mark=x, mark options={solid, RB}}
\addlegendentry{RB};
\addlegendimage{color=L2A1,  line width=2.0pt,mark size=4.0pt, mark=o, mark options={solid, L2A1}}
\addlegendentry{L2A ($\beta=0.3$)};
\addlegendimage{color=L2A2, line width=2.0pt, mark size=4.0pt, mark=+, mark options={solid, L2A2}}
\addlegendentry{L2A ($\beta=1$)};

\end{axis}
\end{tikzpicture}
	
	\begin{tabular}{C{.45\linewidth}C{.45\linewidth}}
		\subfigure [Bitrate selection pedestrian user] {
			\resizebox{0.98\linewidth}{!}{%
				\input{pedestrian_Sample_path_quality_for_Bmax_120s.tex}
				\label{fig:pedestrian_quality_sample}      
			}
		} &
		\subfigure [Bitrate selection markovian channel] {
			\resizebox{0.98\linewidth}{!}{%
%
\definecolor{mycolor5}{RGB}{171,217,233}
\definecolor{L2A1}{RGB}{44,123,182}

\definecolor{L2A2}{RGB}{230,97,1}

\definecolor{RB}{RGB}{253,184,99}
\definecolor{BB}{rgb}{0.24, 0.71, 0.54}
\begin{tikzpicture}

\begin{axis}[%
width=4.521in,
height=3.566in,
at={(0.758in,0.481in)},
scale only axis,
xmin=0,
xmax=600,
xtick={0,100,200,300,400,500,600},
xlabel={\textbf{Decision epochs}},
xlabel style={font=\huge},
tick label style={font=\huge},
ymin=0,
ymax=25,
ylabel style={font=\huge},
ylabel={\textbf{Bitrate (Mbps)}},
axis background/.style={fill=white},
title style={font=\bfseries},
ymajorgrids,
legend style={at={(0.0,0.895)},anchor=west,legend cell align=left, align=left, draw=white!15!black}
]

\addplot [color=mycolor5, dotted, line width=1.5pt]
table[row sep=crcr]{%
1	0.100000000000023\\
2	0.832421875000023\\
21	0.832421875000023\\
22	13.3649739583333\\
23	22.63125\\
27	22.63125\\
28	7.09869791666665\\
29	0.832421875000023\\
140	0.832421875000023\\
141	13.3649739583333\\
142	22.63125\\
353	22.63125\\
354	7.09869791666665\\
355	0.832421875000023\\
381	0.832421875000023\\
382	7.09869791666665\\
383	22.63125\\
384	22.63125\\
385	13.3649739583333\\
386	0.832421875000023\\
406	0.832421875000023\\
407	7.09869791666665\\
408	22.63125\\
411	22.63125\\
412	7.09869791666665\\
413	0.832421875000023\\
428	0.832421875000023\\
429	13.3649739583333\\
430	22.63125\\
510	22.63125\\
511	7.09869791666665\\
512	0.832421875000023\\
523	0.832421875000023\\
524	13.3649739583333\\
525	22.63125\\
600	22.63125\\
};
\addlegendentry{Channel rate}

\addplot [color=BB, line width=2pt, mark size=4.0pt, mark=diamond, mark options={solid, BB},mark repeat=1]
table[row sep=crcr]{%
1	0.3662109375\\
22  0.3662109375\\
23	1.46484375\\
24	1.46484375\\
25	2.9296875\\
26	5.6640625\\
27	11.71875\\
28	2.9296875\\
29	0.732421875\\
30	0.3662109375\\
40	0.3662109375\\
41	0.732421875\\
50	0.732421875\\
51	0.3662109375\\
60	0.3662109375\\
61	0.732421875\\
70  0.732421875\\
71	0.3662109375\\
80	0.3662109375\\
81	0.732421875\\
90	0.732421875\\
91	0.3662109375\\
100	0.3662109375\\
101	0.732421875\\
110	0.732421875\\
113	0.3662109375\\
120	0.3662109375\\
121	0.732421875\\
136	0.732421875\\
137	1.46484375\\
144	1.46484375\\
145	2.9296875\\
148	2.9296875\\
149	5.6640625\\
152	5.6640625\\
153	11.71875\\
154	16.6015625\\
155	11.71875\\
159	11.71875\\
160	16.6015625\\
166	16.6015625\\
167	11.71875\\
168	16.6015625\\
354	16.6015625\\
355	1.46484375\\
356	1.46484375\\
357	0.732421875\\
358	1.46484375\\
359	0.732421875\\
362	0.732421875\\
363	1.46484375\\
406	1.46484375\\
407	2.9296875\\
410	2.9296875\\
411	5.6640625\\
412	5.6640625\\
413	1.46484375\\
426	1.46484375\\
427	0.732421875\\
428	0.732421875\\
429	1.46484375\\
434	1.46484375\\
435	2.9296875\\
437	2.9296875\\
438	5.6640625\\
441	5.6640625\\
442	16.6015625\\
443	11.71875\\
444	16.6015625\\
445	16.6015625\\
446	11.71875\\
447	16.6015625\\
512	16.6015625\\
513	0.3662109375\\
518	0.3662109375\\
519	0.732421875\\
525	0.732421875\\
526	1.46484375\\
528	1.46484375\\
529	2.9296875\\
530	5.6640625\\
531	5.6640625\\
532	16.6015625\\
599	16.6015625\\
600	19.53125\\
};
\addlegendentry{BB}

\addplot [color=RB, line width=2pt,mark size=4.0pt, mark=x, mark options={solid, RB},mark repeat=1]
table[row sep=crcr]{%
1	0.3662109375\\
2	0.732421875\\
27	0.732421875\\
28	1.46484375\\
29	0.732421875\\
149	0.732421875\\
150	1.46484375\\
173	1.46484375\\
174	2.9296875\\
180	2.9296875\\
181	5.6640625\\
190	5.6640625\\
191	11.71875\\
197	11.71875\\
198	16.6015625\\
353	16.6015625\\
354	11.71875\\
357	11.71875\\
358	5.6640625\\
363	5.6640625\\
364	2.9296875\\
367	2.9296875\\
368	1.46484375\\
391	1.46484375\\
392	0.732421875\\
433	0.732421875\\
434	1.46484375\\
458	1.46484375\\
459	2.9296875\\
466	2.9296875\\
467	5.6640625\\
476	5.6640625\\
477	11.71875\\
483	11.71875\\
484	16.6015625\\
510	16.6015625\\
511	11.71875\\
514	11.71875\\
515	5.6640625\\
520	5.6640625\\
521	2.9296875\\
524	2.9296875\\
525	1.46484375\\
528	1.46484375\\
529	0.732421875\\
560	0.732421875\\
561	1.46484375\\
578	1.46484375\\
579	2.9296875\\
585	2.9296875\\
586	5.6640625\\
595	5.6640625\\
596	11.71875\\
600	11.71875\\
};
\addlegendentry{RB}

\addplot [color=L2A1, line width=2pt,mark size=4.0pt, mark=o, mark options={solid, L2A1},mark repeat=1]
table[row sep=crcr]{%
1	0.3662109375\\
22	0.3662109375\\
23	2.9296875\\
24	2.9296875\\
25	5.6640625\\
28	5.6640625\\
29	0.3662109375\\
142	0.3662109375\\
143	5.6640625\\
144	5.6640625\\
145	11.71875\\
146	11.71875\\
147	16.6015625\\
184	16.6015625\\
185	19.53125\\
354	19.53125\\
355	0.3662109375\\
382	0.3662109375\\
383	2.9296875\\
384	2.9296875\\
385	5.6640625\\
386	5.6640625\\
387	0.3662109375\\
408	0.3662109375\\
409	5.6640625\\
410	5.6640625\\
411	11.71875\\
412	11.71875\\
413	0.732421875\\
414	0.732421875\\
415	0.3662109375\\
430	0.3662109375\\
431	2.9296875\\
432	2.9296875\\
433	5.6640625\\
434	5.6640625\\
435	11.71875\\
436	11.71875\\
437	16.6015625\\
456	16.6015625\\
457	19.53125\\
512	19.53125\\
513	0.3662109375\\
524	0.3662109375\\
525	2.9296875\\
526	2.9296875\\
527	5.6640625\\
528	5.6640625\\
529	11.71875\\
530	11.71875\\
531	16.6015625\\
552	16.6015625\\
553	19.53125\\
600	19.53125\\
};
\addlegendentry{L2A $\beta=0.2$}

\addplot [color=L2A2, line width=2pt,mark size=4.0pt, mark=+, mark options={solid, L2A2},mark repeat=1]
table[row sep=crcr]{%
1	0.3662109375\\
21	0.3662109375\\
22	0.732421875\\
23	2.9296875\\
24	2.9296875\\
25	5.6640625\\
27	5.6640625\\
28	2.9296875\\
29	0.3662109375\\
140	0.3662109375\\
141	2.9296875\\
142	5.6640625\\
143	5.6640625\\
144	11.71875\\
145	11.71875\\
146	16.6015625\\
182	16.6015625\\
183	19.53125\\
353	19.53125\\
354	16.6015625\\
355	0.3662109375\\
381	0.3662109375\\
382	0.732421875\\
383	2.9296875\\
384	5.6640625\\
385	5.6640625\\
386	0.3662109375\\
406	0.3662109375\\
407	1.46484375\\
408	2.9296875\\
409	5.6640625\\
410	11.71875\\
411	11.71875\\
412	16.6015625\\
413	0.732421875\\
414	0.3662109375\\
428	0.3662109375\\
429	1.46484375\\
430	2.9296875\\
431	5.6640625\\
432	5.6640625\\
433	11.71875\\
434	11.71875\\
435	16.6015625\\
455	16.6015625\\
456	19.53125\\
510	19.53125\\
511	16.6015625\\
512	0.732421875\\
513	0.3662109375\\
523	0.3662109375\\
524	1.46484375\\
525	2.9296875\\
526	5.6640625\\
527	5.6640625\\
528	11.71875\\
529	11.71875\\
530	16.6015625\\
550	16.6015625\\
551	19.53125\\
600	19.53125\\
};
\addlegendentry{L2A $\beta=1$}

\legend{}
\end{axis}
\end{tikzpicture}%
				\label{fig:markovian_quality_sample}
			}
		} \\
		\subfigure [Buffer level pedestrian user] {
			\resizebox{0.98\linewidth}{!}{%
				\input{pedestrian_Sample_path_buffer_for_Bmax_120s.tex}
				\label{fig:pedestrian_buffer_sample}
			}
		} & 
		\subfigure [Buffer level markovian channel] {
			\resizebox{0.98\linewidth}{!}{%
				\input{markovian_Sample_path_buffer_for_Bmax_120s.tex}
				\label{fig:markovian_buffer_sample}      
			}
		} \\
	\end{tabular}
	\caption{Sample paths for bitrate selection and buffer level}
	\label{fig:sample}
\end{figure}

\begin{figure}[!t]
	\centering
	\resizebox{0.98\linewidth}{!}{%
%
%
\definecolor{mycolor5}{RGB}{171,217,233}
\definecolor{L2A1}{RGB}{44,123,182}

\definecolor{L2A2}{RGB}{230,97,1}

\definecolor{RB}{RGB}{253,184,99}
\definecolor{BB}{rgb}{0.24, 0.71, 0.54}
\begin{tikzpicture}

\begin{axis}[%
hide axis,
xmin=10,
xmax=50,
ymin=0,
ymax=0.4,
legend style={at={(0.5,1.05)}, anchor=north, legend columns=5, legend cell align=left, align=left, draw=white!15!black}
]

\addlegendimage{color=BB, line width=2.0pt,mark size=4.0pt, mark=diamond, mark options={solid, BB}}
\addlegendentry{BB};
\addlegendimage{color=RB,  line width=2.0pt,mark size=4.0pt, mark=x, mark options={solid, RB}}
\addlegendentry{RB};
\addlegendimage{color=L2A1,  line width=2.0pt,mark size=4.0pt, mark=o, mark options={solid, L2A1}}
\addlegendentry{L2A ($\beta=0.3$)};
\addlegendimage{color=L2A2, line width=2.0pt, mark size=4.0pt, mark=+, mark options={solid, L2A2}}
\addlegendentry{L2A ($\beta=1$)};

\end{axis}
\end{tikzpicture}

	\begin{tabular}{C{.45\linewidth}C{.45\linewidth}}
		\subfigure [$T^{0.9}$-slot regret $R_T/T$ ] {
			\resizebox{0.98\linewidth}{!}{%
%
%
%
\definecolor{mycolor5}{RGB}{171,217,233}
\definecolor{L2A1}{RGB}{44,123,182}

\definecolor{L2A2}{RGB}{230,97,1}

\definecolor{RB}{RGB}{253,184,99}
\definecolor{BB}{rgb}{0.24, 0.71, 0.54}
\begin{tikzpicture}

\begin{axis}[%
width=4.521in,
height=3.565in,
at={(0.758in,0.482in)},
scale only axis,
xmin=0,
xmax=900,
xlabel style={font=\huge},
xlabel={\textbf{Horizon (minutes)}},
tick label style={font=\huge},
ymin=-7,
ymax=2,
ylabel style={font=\huge},
ylabel={\textbf{$T^{0.9}$-slot regret} $R_T/T$},
axis background/.style={fill=white},
ymajorgrids,
legend style={legend cell align=left, align=left, draw=white!15!black}
]
\addplot [color=BB,  line width=2pt, mark size=4.0pt, mark=diamond, mark options={solid, BB},smooth]
  table[row sep=crcr]{%
23.2	-4.31848284841954\\
55.2333333333333	-4.8345737307634\\
72.4333333333334	-4.09346192979172\\
99.4666666666667	-4.11585585361513\\
130.4	-1.17706735685067\\
143.066666666667	-0.90114739267824\\
154.133333333333	-0.838388621188415\\
175.466666666667	-0.755857334310917\\
200.233333333333	-0.961382456144861\\
245.6	-1.60422392440285\\
291.666666666667	-1.12235770089285\\
333.366666666667	-1.38886257858587\\
342.066666666667	-1.46988738793607\\
364.733333333333	-1.66412113658953\\
372.666666666667	-1.83534571011296\\
399.5	-1.87315646837192\\
426.133333333333	-1.87941301749254\\
445.966666666667	-2.15217470008974\\
469.966666666667	-2.21754285543307\\
504.5	-2.30726843822265\\
530.933333333333	-2.23134537547867\\
549.133333333333	-2.10221468772761\\
555.733333333333	-2.13419352291646\\
572.033333333333	-2.18180428642711\\
582.4	-2.16309543931004\\
604.133333333333	-2.24326088896282\\
631.066666666667	-2.46863543496136\\
652.833333333333	-2.34962632834436\\
668.133333333333	-2.4175244637579\\
707.366666666667	-2.37273256738604\\
729.433333333333	-2.42817559756713\\
752.333333333333	-2.44775152650084\\
766.933333333333	-2.39113289059787\\
784.333333333333	-2.37385410579577\\
809.666666666667	-2.34181717817262\\
825.966666666667	-2.31503029751752\\
858.533333333333	-2.45656787951111\\
871.166666666667	-2.46437237690122\\
889.4	-2.50794513857659\\
939.5	-2.57226857010596\\
973.9	-2.57352729815898\\
};
\addlegendentry{BB}
\addplot [color=RB,  line width=2pt, mark size=4.0pt, mark=x, mark options={solid, RB},smooth]
  table[row sep=crcr]{%
23.2	0.36375550017965\\
55.2333333333333	-1.77572891520822\\
72.4333333333334	-1.28666491098136\\
99.4666666666667	-1.51512002817105\\
130.4	1.26700372052335\\
143.066666666667	1.45062875881001\\
154.133333333333	1.46983321562766\\
175.466666666667	1.4958483107547\\
200.233333333333	1.38385506622478\\
245.6	0.877107005165954\\
291.666666666667	1.03530412946429\\
333.366666666667	0.858725162639985\\
342.066666666667	0.832431029678673\\
364.733333333333	0.666808245321931\\
372.666666666667	0.62609404698685\\
399.5	0.595244788407399\\
426.133333333333	0.652065921635426\\
445.966666666667	0.359718282266044\\
469.966666666667	0.309448545220448\\
504.5	0.261465172819612\\
530.933333333333	0.343305126977612\\
549.133333333333	0.442491174540237\\
555.733333333333	0.386439815821404\\
572.033333333333	0.391700521895586\\
582.4	0.416196047604785\\
604.133333333333	0.319715915360803\\
631.066666666667	0.129263511349791\\
652.833333333333	0.22844631573912\\
668.133333333333	0.191200452753947\\
707.366666666667	0.198625845271181\\
729.433333333333	0.152754403389622\\
752.333333333333	0.10787835414817\\
766.933333333333	0.136930123489606\\
784.333333333333	0.158534732920771\\
809.666666666667	0.115822563940924\\
825.966666666667	0.143695069098385\\
858.533333333333	-0.00278018501751376\\
871.166666666667	-0.0456790986105489\\
889.4	-0.0570750309782397\\
939.5	-0.161649072256978\\
973.9	-0.170858328145869\\
};
\addlegendentry{RB}

\addplot [color=L2A1,  line width=2pt, mark size=4.0pt, mark=o, mark options={solid, L2A1}]
table[row sep=crcr]{%
23.2	-6.59281412760413\\
55.2333333333333	-6.69365228481445\\
72.4333333333334	-5.95460444949379\\
99.4666666666667	-6.19626288119969\\
130.4	-3.00170648804954\\
143.066666666667	-2.49327760040478\\
154.133333333333	-2.21162683823525\\
175.466666666667	-2.11390811259264\\
200.233333333333	-2.21700171414602\\
245.6	-2.75016395328612\\
291.666666666667	-2.219609375\\
333.366666666667	-2.44751403765872\\
342.066666666667	-2.39227122438979\\
364.733333333333	-2.64090006540164\\
372.666666666667	-2.73784494144115\\
399.5	-2.69731492164681\\
426.133333333333	-2.68053572825409\\
445.966666666667	-2.94880232956029\\
469.966666666667	-3.00408467577483\\
504.5	-3.08012071048063\\
530.933333333333	-2.9294913046208\\
549.133333333333	-2.72267776828255\\
555.733333333333	-2.78103877852834\\
572.033333333333	-2.74801677176811\\
582.4	-2.73179833268944\\
604.133333333333	-2.80213735930261\\
631.066666666667	-3.05169978203242\\
652.833333333333	-2.83820176873246\\
668.133333333333	-2.91723564603183\\
707.366666666667	-2.82409992975704\\
729.433333333333	-2.90578218001701\\
752.333333333333	-2.89564627063021\\
766.933333333333	-2.79107657526731\\
784.333333333333	-2.76196091923873\\
809.666666666667	-2.71585388309234\\
825.966666666667	-2.69980179474055\\
858.533333333333	-2.86131750854167\\
871.166666666667	-2.89801787892191\\
889.4	-2.93376474542765\\
939.5	-3.01101400345931\\
973.9	-3.04792653356776\\
};
\addlegendentry{L2A$\beta=0.2$}
\addplot [color=L2A2, line width=2pt, mark size=4.0pt, mark=+, mark options={solid, L2A2},smooth]
  table[row sep=crcr]{%
23.2	-6.7081495263111\\
55.2333333333333	-6.77277329605465\\
72.4333333333334	-6.06205778301887\\
99.4666666666667	-6.25252813180714\\
130.4	-3.1447209707311\\
143.066666666667	-2.69407928267708\\
154.133333333333	-2.38731938662409\\
175.466666666667	-2.28166203368403\\
200.233333333333	-2.34761511960005\\
245.6	-2.90175447246543\\
291.666666666667	-2.23897321428569\\
333.366666666667	-2.56287847777719\\
342.066666666667	-2.54454626671827\\
364.733333333333	-2.75439578073588\\
372.666666666667	-2.87774974843467\\
399.5	-2.85593808667079\\
426.133333333333	-2.84164676976593\\
445.966666666667	-3.13396526564486\\
469.966666666667	-3.16926012039858\\
504.5	-3.23210732315408\\
530.933333333333	-3.07953018809246\\
549.133333333333	-2.86268887479889\\
555.733333333333	-2.9161215286108\\
572.033333333333	-2.89829534901082\\
582.4	-2.8967679201902\\
604.133333333333	-2.97422255408924\\
631.066666666667	-3.20840654629069\\
652.833333333333	-3.02032681061723\\
668.133333333333	-3.06450930024255\\
707.366666666667	-2.98154365442838\\
729.433333333333	-3.04745693003701\\
752.333333333333	-3.03810562451542\\
766.933333333333	-2.93273395225299\\
784.333333333333	-2.88611675088987\\
809.666666666667	-2.84895625611102\\
825.966666666667	-2.81584322599281\\
858.533333333333	-2.98391390446648\\
871.166666666667	-3.00604788687349\\
889.4	-3.06729727818617\\
939.5	-3.13541268211145\\
973.9	-3.16200756408944\\
};
\addlegendentry{L2A $(\beta=1)$}
\legend{}
\end{axis}

\end{tikzpicture}%
				\label{fig:regret}
			}
		}&
		\subfigure [Constraint residual $V^1_T/T$] {
			\resizebox{0.98\linewidth}{!}{%
				\definecolor{mycolor5}{RGB}{171,217,233}
\definecolor{L2A1}{RGB}{44,123,182}

\definecolor{L2A2}{RGB}{230,97,1}

\definecolor{RB}{RGB}{253,184,99}
\definecolor{BB}{rgb}{0.24, 0.71, 0.54}
\begin{tikzpicture}

\begin{axis}[%
width=4.521in,
height=3.765in,
at={(0.758in,0.482in)},
scale only axis,
xmin=0,
xmax=900,
xlabel style={font=\huge},
xlabel={\textbf{Horizon (minutes)}},
tick label style={font=\huge},
ymin=-0.5000,
ymax=2.5,
ylabel style={font=\huge},
ylabel={\textbf{Constraint residual $V^1_T$/$T$}},
axis background/.style={fill=white},
ymajorgrids,
legend style={legend cell align=left, align=left, draw=white!15!black}
]
\addplot [color=BB, line width=2pt, mark size=4.0pt, mark=diamond, mark options={solid, BB}]
  table[row sep=crcr]{%
23.2	-0.0164422219162361\\
55.2333333333333	0.0794683885150107\\
72.4333333333334	0.0959913968220008\\
99.4666666666667	0.0341676674599967\\
130.4	0.0295756188763789\\
143.066666666667	0.0278863523016071\\
154.133333333333	0.0365524629959282\\
175.466666666667	0.0260966862157375\\
200.233333333333	0.0147329494711812\\
245.6	0.0311335640304833\\
291.666666666667	0.073594718590698\\
333.366666666667	0.0161125728385514\\
342.066666666667	0.0188910833899172\\
364.733333333333	0.00679812293060422\\
372.666666666667	0.0136842336728478\\
399.5	0.0260254663783144\\
426.133333333333	0.00687746634935138\\
445.966666666667	0.00729641471730247\\
469.966666666667	0.000846225098825926\\
504.5	0.00315654861492476\\
530.933333333333	0.00735832638508782\\
549.133333333333	0.0238018026426516\\
555.733333333333	-0.00125353920975613\\
572.033333333333	0.00980151398402995\\
582.4	0.00814330248829265\\
604.133333333333	0.0208635769802186\\
631.066666666667	0.0203713602753623\\
652.833333333333	0.0254288249786896\\
668.133333333333	0.0228459336302649\\
707.366666666667	0.0165761646288729\\
729.433333333333	0.00735623490709258\\
752.333333333333	0.00159207035744657\\
766.933333333333	0.00302003916317517\\
784.333333333333	0.00209517079474608\\
809.666666666667	-0.00642932037112587\\
825.966666666667	0.00286200287359861\\
858.533333333333	-0.0115871374790686\\
871.166666666667	-0.0214039964050698\\
889.4	-0.0153769965197625\\
939.5	-0.024961088009718\\
973.9	-0.0247525945037523\\
};
\addlegendentry{BB}

\addplot [color=RB, line width=2pt, mark size=4.0pt, mark=x, mark options={solid, RB}]
table[row sep=crcr]{%
23.2	-0.473206834719122\\
55.2333333333333	-0.170737732728071\\
72.4333333333334	-0.228338960992119\\
99.4666666666667	0.0941109553737078\\
130.4	0.0175427697919304\\
143.066666666667	0.00495702466082548\\
154.133333333333	-0.00588653010595408\\
175.466666666667	1.00647799128797\\
200.233333333333	0.823087682456617\\
245.6	0.9228178097311\\
291.666666666667	2.3106449908978\\
333.366666666667	2.3017421899134\\
342.066666666667	2.22575987612527\\
364.733333333333	2.06464625465128\\
372.666666666667	1.99923441006706\\
399.5	1.84567510113345\\
426.133333333333	1.70652829155529\\
445.966666666667	1.61924807272965\\
469.966666666667	1.64129129755781\\
504.5	1.49994561331994\\
530.933333333333	1.40818619295464\\
549.133333333333	1.38528324121341\\
555.733333333333	1.36001406484786\\
572.033333333333	1.31450367111506\\
582.4	1.27542476973633\\
604.133333333333	1.22433195065105\\
631.066666666667	1.15949490694516\\
652.833333333333	1.1247965035476\\
668.133333333333	1.08185180336295\\
707.366666666667	1.27870797808646\\
729.433333333333	1.21500525803947\\
752.333333333333	1.16029533727817\\
766.933333333333	1.13386296405531\\
784.333333333333	1.09991821050858\\
809.666666666667	1.19946431594394\\
825.966666666667	1.23522976954769\\
858.533333333333	1.17076002369743\\
871.166666666667	1.14252782669553\\
889.4	1.12040420292681\\
939.5	1.04358584509646\\
973.9	0.998730904743979\\
};
\addlegendentry{RB}

\addplot [color=L2A1, line width=2pt, mark size=4.0pt, mark=o, mark options={solid, L2A1}]
  table[row sep=crcr]{%
23.2	0.0337111829253445\\
55.2333333333333	0.158727401435499\\
72.4333333333334	0.216554494587172\\
99.4666666666667	0.242217443995628\\
130.4	0.246116866549869\\
143.066666666667	0.354699844895435\\
154.133333333333	0.27777255461092\\
175.466666666667	0.287452413508504\\
200.233333333333	0.245109458092657\\
245.6	0.210049397161583\\
291.666666666667	0.17052172749095\\
333.366666666667	0.119327613788982\\
342.066666666667	0.0818074275126719\\
364.733333333333	0.108617219400912\\
372.666666666667	0.0883967978801365\\
399.5	0.064512664445715\\
426.133333333333	0.0655532453221213\\
445.966666666667	0.0876490040361659\\
469.966666666667	0.0810504913105206\\
504.5	0.0836405203225468\\
530.933333333333	0.0674289869696167\\
549.133333333333	0.0516985019155527\\
555.733333333333	0.0440604078285105\\
572.033333333333	0.0624433436489653\\
582.4	0.0618339622393478\\
604.133333333333	0.0834503535149906\\
631.066666666667	0.0905284585759318\\
652.833333333333	0.0537537396935477\\
668.133333333333	0.0568269919737077\\
707.366666666667	0.0219255224528752\\
729.433333333333	0.0347475825905121\\
752.333333333333	0.0270179427062658\\
766.933333333333	0.00729410743849712\\
784.333333333333	0.00623467253058152\\
809.666666666667	0.00431337059183079\\
825.966666666667	0.0131339043078924\\
858.533333333333	0.0158039677703528\\
871.166666666667	0.0122291917807615\\
889.4	0.0264482553239986\\
939.5	0.0382228700117366\\
973.9	0.0525698580559038\\
};
\addlegendentry{L2A $(\beta=0.2)$}

\addplot [color=L2A2, line width=2pt, mark size=4.0pt, mark=+, mark options={solid, L2A2}]
  table[row sep=crcr]{%
23.2	-0.093606457170722\\
55.2333333333333	0.262028560916519\\
72.4333333333334	0.275876042497998\\
99.4666666666667	0.25106684167713\\
130.4	0.222072535145685\\
143.066666666667	0.241363451608322\\
154.133333333333	0.267718333865332\\
175.466666666667	0.293224363754007\\
200.233333333333	0.221740331985757\\
245.6	0.23357925835171\\
291.666666666667	0.185745816157919\\
333.366666666667	0.17570052937981\\
342.066666666667	0.0946351583800151\\
364.733333333333	0.0849085993261269\\
372.666666666667	0.0664459909944526\\
399.5	0.0636977236774783\\
426.133333333333	0.0698835320388298\\
445.966666666667	0.091025509453516\\
469.966666666667	0.0825497736988154\\
504.5	0.0925389421651062\\
530.933333333333	0.0648053109565581\\
549.133333333333	0.0535608971198371\\
555.733333333333	0.0481582872304216\\
572.033333333333	0.0526266416729868\\
582.4	0.0562544857415332\\
604.133333333333	0.0837381702529001\\
631.066666666667	0.0875468974007845\\
652.833333333333	0.0679464050203933\\
668.133333333333	0.069783038577043\\
707.366666666667	0.0590084118769028\\
729.433333333333	0.0680416928432805\\
752.333333333333	0.0527250693321548\\
766.933333333333	0.0469906270700449\\
784.333333333333	0.0385381152163973\\
809.666666666667	0.0318928135546912\\
825.966666666667	0.0370400352589968\\
858.533333333333	0.0453842361689567\\
871.166666666667	0.0319446628882361\\
889.4	0.0455522429634811\\
939.5	0.0524425755706943\\
973.9	0.0651649403723695\\
};
\addlegendentry{L2A $(\beta=1)$}
\legend{}
\end{axis}
\end{tikzpicture}%
				\label{fig:violation}
		}}
	\end{tabular}
	\caption{Convergence of regret and constraint residual}
	\label{fig:regret_const}
\end{figure}
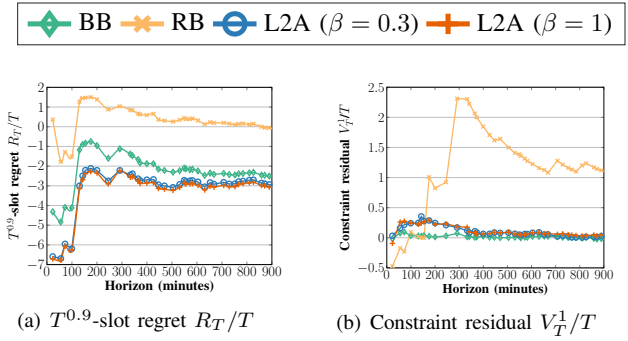

\begin{table*}[t]\caption{Live streaming ($B_{max}=20s$) results (BB / RB / L2A ($\beta=0.3$) / L2A ($\beta=1$))}
	\centering
	\begin{tabular}{c|cccc}
		& Static & Pedestrian    &   Car  &   Markovian \\ \hline
		
		\multirow{ 1}{*}{\textbf{Average bitrate} } &   0.93 / 0.59 / 0.88 / \textbf{0.98}  &  0.94 / 0.58 / 0.93 / \textbf{0.96}     & 0.93 / 0.59 / 0.96 / \textbf{0.98} & 0.91 / 0.69 / 0.97 / \textbf{1.00}      \\
		
		\multirow{ 1}{*}{\textbf{Stability} }     &  0.42 / \textbf{0.95} / 0.82 / 0.72 & 0.50 / \textbf{0.92} / 0.86 / 0.75    & 0.56 / \textbf{0.91} / 0.86 / 0.78    & 0.86 / \textbf{0.93} / 0.87 / 0.82  \\
		
		\multirow{ 1}{*}{\textbf{Smoothness} }   &  0.86 / 0.92 / 0.94 / \textbf{0.94} &  0.86 / 0.92 / 0.94 / \textbf{0.95}   & 0.87 / 0.95 / 0.94 / \textbf{0.97}    & 0.96 / 0.95 / 0.92 / \textbf{0.98}    \\
		
		\multirow{ 1}{*}{\textbf{Consistency}} &  0.87 / 0.81 / 0.92 / \textbf{0.92} & 0.85 / 0.62 / 0.83 / \textbf{0.85} & 0.88 / 0.62 / 0.80 / \textbf{0.85}   & 0.78 / 0.48 / 0.83 / \textbf{0.84}   \\  
		
		\multirow{ 1}{*}{\textbf{Continuity}} & 0.93 / 0.95 / 0.97 / \textbf{0.97}  & 0.93 / 0.93 / 0.97 / \textbf{0.97}    & 0.93 / 0.94 / 0.95 / \textbf{0.95}    & 0.92 / 0.88 / 0.94 / \textbf{0.94}    
	\end{tabular}
	\label{tab:live} 
\end{table*}

In regard to the evaluation results, each point on \fref{fig:radar} corresponds to a score for each of the five performance metrics and is the average result over all traces for the considered network scenario. In particular, \fref{fig:static_radar} shows the performance of a static user (no mobility), \fref{fig:pedestrian_radar} shows the performance of a pedestrian user (low mobility), \fref{fig:car_radar} shows a user while being mobile in a car (high mobility) and \fref{fig:markovian_radar} corresponds to the artificial \emph{markovian} scenario. 

In \fref{fig:radar} \emph{L2A}, our proposed method, registers significant improvement in average bitrate, almost up to $45\%$ against \emph{RB} and up to $20\%$ against \emph{BB}, for all studied real network scenarios and for both the cases of restricted ($\beta=0.3$) and unrestricted ($\beta=1$) switching. At the same time \emph{L2A} offers consistent (i.e without interruptions) streaming with equivalent continuity to all the other methods, i.e. all methods experience a few brief stalls during periods of very poor channel quality. In regard to smoothness, all methods obtain equivalent scores. Nonetheless, in regard to the stability metric, \emph{L2A}'s restricted switching variant ($\beta=0.3$) achieves about $15\%$ improvement in stability when compared to the case of unrestricted ($\beta=1$) switching; a result that is anticipated from our algorithmic design. Additionally, \emph{L2A}'s restricted switching variant ($\beta=0.3$) improves on stability by $25\%$ against \emph{BB}. Comparing \emph{L2A} and \emph{RB} in stability, we observe equivalent performance, yet \emph{RB} is overall more conservative, given the low average bitrate it obtained in all scenarios.

In the markovian network scenario depicted in \fref{fig:markovian_radar}, \emph{L2A} performs $50\%$ better against \emph{RB} and $25\%$ better against \emph{BB} in terms of average bitrate, while performing equally well, or even better (i.e. against \emph{BB} in stability), in all other metrics. Thus \emph{L2A} is \emph{robust} against the channel fluctuations and doesn't require any assumption on the channel rate distribution.

In \fref{fig:pedestrian_quality_sample}, a sample path for the channel rate and the bitrate selection for each method is presented for a randomly-selected pedestrian-mobility trace, while, \fref{fig:pedestrian_buffer_sample} depicts the evolution of the buffer for the same trace\footnote{We note here that in \fref{fig:sample}, while some markers have been omitted for clarity, the lines remain an accurate representation of the results.}. From these plots, we can argue that \emph{L2A} learns the volatile channel distribution, in order to re-actively provide the highest bitrate (which is the optimization objective) and to pro-actively protect the buffer from under-flowing (which is one of the optimization constraints). While this `adaptive behavior' of \emph{L2A} may come only at a marginal cost in smoothness (i.e. bitrate distance between consecutive decisions), \fref{fig:pedestrian_radar} shows that L2A achieves on average only $3\%$ less smoothness than the other methods;  a trade-off that is aligned with common \ac{HAS} optimization principles. 

Similarly, we provide a bitrate selection sample path in \fref{fig:markovian_quality_sample}, for a randomly-selected markovian trace and its corresponding buffer evolution in \fref{fig:markovian_buffer_sample}. Here, we observe: (i) that in terms of matching the bitrate to the channel rate at each decision epoch, \emph{L2A} presents a more efficient channel utilization, (ii) a slightly unstable behavior for \emph{BB}, especially at the beginning of the session and (iii) some stall events occurring for \emph{RB}. \emph{L2A} consistently manages to offer high bitrate, stable and uninterrupted streaming; even in the most demanding network scenarios.

To further investigate the \emph{robustness} property of our method, we have synthesized an additional network profile, where we have concatenated \emph{car} traces to extend the streaming session duration (horizon) in order to simulate longer, yet realistic scenarios. For these concatenated \emph{car} traces, \fref{fig:regret} presents the regret rate $R_T/T$ against the K-Slot benchmark of \sref{Algorithm}, for $K=T^{0.9}$. Here \emph{L2A}, for both studied values of $\beta~(0.3, 1)$, achieves better regret than any other method, significantly improving on the K-Benchmark in any streaming horizon, a result that is anticipated from Theorem \ref{thm:theorem1}.

Regarding the constraint residual $V^i_T~\forall i=1,2$, we examine in particular the case of underflow ($i=1$), as stalls are the most significant factors that can affect the streaming experience. Potential buffer overflows ($i=2$) can be easily tackled by simply inducing a short delay before requests, according to \eqref{eq:buffer_evolution}. In \fref{fig:violation} we present the constraint residual rate for the concatenated \emph{car} traces. \emph{L2A} manages to respect the underflow constraint on average, given that the constraint residual rate $V_T^1/T$ converges to 0.

In order to investigate the merits of \emph{L2A} beyond \ac{VoD}, we repeated the same cycle of experiments for the case of \emph{live streaming}, where now $B_{max}=20$s. In industrial live streaming applications, such small buffer values are commonly used, given the strict delay requirements. We present our results in \tref{tab:live}, which depicts that although \emph{RB} achieves higher values in stability, it is not able to compete with the other methods in terms of average bitrate. On the contrary, \emph{L2A} manages to provide up to $30\%$ higher live streaming bitrate when compared to \emph{RB} and equivalent bitrate to \emph{BB}, while its switch-restricted instance shows up to $40\%$ improvement in stability when compared to \emph{BB}. Online learning methods have -- by design -- less dependency on the instantaneous buffer length and are also more reactive to throughput fluctuations, unlike throughput-based methods that are, normally, as efficient as their throughput estimation module.

	\section{Conclusions}
\label{sec:conclusions}
In this work we present \emph{Learn2Adapt} (\emph{L2A}), a novel rate adaptation algorithm for HAS, based on online learning. Overall, our proposed method performs well over a wide spectrum of streaming scenarios, due to its design principle; its ability to learn. It does so  without requiring any parameter tuning, modifications according to application type or statistical assumptions for the channel. The \emph{robustness} property of \emph{L2A} allows it to be classified in the small set of rate adaptation algorithms for video streaming, that mitigate the main limitation of existing mobile \ac{HAS} approaches; the dependence on statistical models for the unknowns. This is of significant relevance in the field of modern \ac{HAS}, where \ac{OTT} video service providers are continuously expanding their services to include more diverse user classes, network scenarios and streaming applications.

    \acrodef{16-QAM}{16 Quadrature Amplitude Modulation}
    \acrodef{64-QAM}{64 Quadrature Amplitude Modulation}
    \acrodef{256-QAM}{256 Quadrature Amplitude Modulation}
    \acrodef{ACD}{Asymmetric Cooperation Diversity}
    \acrodef{ACF}{Autocorrelation Function}
    \acrodef{ACK}{Acknowledgment}
    \acrodef{ARQ}{Automatic Repeat Request}
    \acrodef{AP}{Access Point}
    \acrodef{aDA}{advanced Dynamic Algorithm}
    \acrodef{ADC}{Analog-to-Digital Converter}
    \acrodef{APP}{Application layer}
    \acrodef{ASIC}{Application Specific Integrated Circuits}
    \acrodef{AWGN}{Additive White Gaussian Noise}
    \acrodef{bDA}{basic Dynamic Algorithm}
    \acrodef{BER}{Bit Error Rate}
    \acrodef{BPSK}{Binary Phase Shift Keying}
    \acrodef{BS}{Base Station}
    \acrodef{CC}{Coded Cooperation}
    \acrodef{CBR}{Constant Bit Rate}
    \acrodef{CDF}{Cumulative Distribution Function}
    \acrodef{CDMA}{Code-Division-Multiple-Access}
    \acrodef{CIC}{Cascaded Integrator-Comb}
    \acrodef{CIF}{Common Intermediate Format}
    \acrodef{CNR}{Channel Gain-to-Noise Ratio}
    \acrodef{CRC}{Cyclic Redundancy Check}
    \acrodef{CTS}{Clear-To-Send}
    \acrodef{CSI}{Channel State Information}
    \acrodef{DAC}{Digital-to-Analog Converter}
    \acrodef{DF}{Decode-and-Forward}
    \acrodef{PSD}{Power Spectral Density}
    \acrodef{DCT}{Discrete Cosine Transform}
    \acrodef{DoF}{Degree of Freedom}
    \acrodef{DIFS}{Distributed Coordination Function IFS}
    \acrodef{DIV}{Distortion In interVal}
    \acrodef{DLC}{Data Link Control layer}
    \acrodef{DSP}{Digital Signal Processor}
    \acrodef{ebDA}{extended basic Dynamic Algorithm}
    \acrodef{eCTS}{extended CTS}
    \acrodef{ePLCP}{extended physical layer convergence procedure}
    \acrodef{eRTS}{extended request-to-send}
    \acrodef{FDMA}{Frequency Division Multiple Access}
    \acrodef{FEC}{Forward Error Correction}
    \acrodef{FER}{Frame Error Rate}
    \acrodef{FIFO}{First-In-First-Out}
    \acrodef{FPGA}{Field Programmable Gate Array}
    \acrodef{FCS}{Frame Check Sequence}
    \acrodef{GMSK}{Gaussian Minimum Shift Keying}
    \acrodef{GOP}{Group Of Pictures}
    \acrodef{GSR}{GNU Software Radio}
    \acrodef{HTTP}{Hypertext Transfer Protocol}
    \acrodef{HTML}{Hypertext Mark-up Language}
    \acrodef{ICI}{Inter-carrier Interference}
    \acrodef{IEEE}{Institute of Electrical and Electronics Engineers, Inc.}
    \acrodef{IFS}{Inter-Frame Space}
    \acrodef{IP}{Internet Protocol}
    \acrodef{IPC}{Inter-Process Communication}
    \acrodef{ISI}{Inter-symbol Interference}
    \acrodef{LLC}{Logical Link Control}
    \acrodef{MAC}{Medium Access Control}
    \acrodef{MCM}{Multi Carrier Modulation}
    \acrodef{MIMO}{Multiple-Input Multiple-Output}
    \acrodef{MISO}{Multiple-Input Single-Output}
    \acrodef{MPEG}{Moving Pictures Expert Group}
    \acrodef{MAF}{Mobile\slash{}Akiyo\slash{}Football}
    \acrodef{MIV}{Mean distortion In Interval}
    \acrodef{MOS}{Mean Opinion Score}
    \acrodef{MRC}{Maximum Ratio Combining}
    \acrodef{MSB}{Most Significant Bit}
    \acrodef{MSE}{Mean Squared Error}
    \acrodef{MSC}{Message Sequence Chart}
    \acrodef{MSS}{Maximum Segment Size}
    \acrodef{MTU}{Maximum Transmission Unit}
    \acrodef{MUD}{Multiuser Diversity}
    \acrodef{NAV}{Network Allocation Vector}
    \acrodef{NTP}{Network Time Protocol}
    \acrodef{OFDM}{Orthogonal Frequency Division Multiplexing}
    \acrodef{FDD}{Frequency Division Duplexing}
    \acrodef{OFDMA}{OFDM multiple Access}
    \acrodef{OS}{Operating System}
    \acrodef{PER}{Packet Error Rate}
    \acrodef{PDF}{Probability Density Function}
    \acrodef{PDU}{Protocol Data Unit}
    \acrodef{PHY}{Physical layer}
    \acrodef{PLCP}{Physical Layer Convergence Procedure}
    \acrodef{PPDU}{Physical Protocol Data Unit}
    \acrodef{PSNR}{Peak Signal-to-Noise Ratio}
    \acrodef{QoS}{Quality of Service}
    \acrodef{QPSK}{Quadperrature Phase Shift Keying}
    \acrodef{RCPC}{Rate-Compatible Punctured Convolutional}
    \acrodef{RF}{Radio Frequency}
    \acrodef{RTT}{Round Trip Time}
    \acrodef{RTS}{Request-To-Send}
    \acrodef{RS}{Relay Station}
    \acrodef{RSSI}{Received Signal Strength Indication}
    \acrodef{SDF}{Selection Decode-and-Forward}
    \acrodef{SDR}{Software Defined Radio}
    \acrodef{SIFS}{Short Inter-Frame Space}
    \acrodef{SR}{Software Radio}
    \acrodef{SEP}{Symbol Error Probability}
    \acrodef{SCM}{Single Carrier Modulation}
    \acrodef{SNR}{Signal-to-Noise Ratio}
    \acrodef{STC}{Space-Time Coding}
    \acrodef{STBC}{Space-Time Block Coding}
    \acrodef{TACD}{Traffic-aware Asymmetric Cooperation Diversity}
    \acrodef{TCP}{Transmission Control Protocol}
    \acrodef{TDMA}{Time Division Multiple Access}
    \acrodef{UDP}{User Datagram Protocol}
    \acrodef{USRP}{Universal Software Radio Peripheral}
    \acrodef{VoIP}{Voice over IP}
    \acrodef{VQM}{Video Queue Management}
    \acrodef{WLAN}{Wireless Local Area Network}
    \acrodef{WMAN}{Wireless Metropolitan Area Network}
    \acrodef{WT}{Wireless Terminal}
    \acrodef{WWW}{World-Wide-Web}


    \acrodef{HAS}{HTTP Adaptive Streaming}
	\acrodef{QoE}{Quality of Experience}
	\acrodef{VBR}{Variable Bit Rate}
	\acrodef{VoD}{Video on Demand}
	\acrodef{3G}{$3^{rd}$ Generation}
	\acrodef{4G}{$4^{th}$ Generation}
	\acrodef{5G}{$5^{th}$ Generation}
	\acrodef{HTTP}{HyperText Transfer Protocol}
	\acrodef{TCP}{Transmission Control Protocol}
	\acrodef{MPEG}{Moving Picture Expert Group}
	\acrodef{URL}{Uniform Resource Locator}
	\acrodef{EWMA}{Exponentially Weighted Moving Average}
	\acrodef{ICN}{Information Centric Networking }
	\acrodef{TCP/IP}{Transmission Control Protocol/Internet Protocol}
	\acrodef{CDF}{Cummulative Distribution Function}
	\acrodef{2CDF}{empirical Cummulative Distribution Function}
	\acrodef{DASH}{Dynamic Adaptive Streaming over HTTP}
	\acrodef{IRT}{Inter-Request Time}
	\acrodef{IAT}{Inter-Arrival Time}
	\acrodef{ADB}{Android Debug Bridge}
	\acrodef{QUIC}{Quick UDP Internet Connections}
	\acrodef{TLS}{Transport Layer Security}
	\acrodef{TBF}{Token Bucket Filter}
	\acrodef{LTE}{Long Term Evolution}
	\acrodef{USB}{Universal Serial Bus}
	\acrodef{SINR}{Signal-to-Interference-plus-Noise Ratio}
		\acrodef{CDN}{Content Delivery Network}
		\acrodef{DNS}{Domain Name System}
		\acrodef{DPI}{Deep Packet Inspection}
		\acrodef{UI}{User Interface}
		\acrodef{RL}{Reinforcement Learning}
		\acrodef{DP}{Dynamic programming}
		\acrodef{DL}{Deep Learning}
		\acrodef{NN}{Neural Networks}
		\acrodef{MDP}{Markov Decision Process}
		\acrodef{QL}{Q-Learning}
		\acrodef{OCO}{Online Convex Optimization}
		\acrodef{OTT}{Over-The-Top}
	


%





\ifCLASSOPTIONcaptionsoff
  \newpage
\fi



	\bibliographystyle{ieeetr}
	\bibliography{My_bib} 

%

\end{document}